\documentclass[12pt]{amsart}  
\usepackage{amscd}
\usepackage{amsmath,amssymb,amsthm,yfonts,mathrsfs}

\newtheorem{thm}{Theorem}
\newtheorem{prop}{Proposition}

\newtheorem{cor}{Corollary}

\theoremstyle{definition}

\theoremstyle{definition}
\newtheorem{defn}{Definition}

\newcommand{\End}{\mathrm{End}}

\newcommand{\bR}{\mathbf{R}}

\newcommand{\bC}{\mathbf{C}}
\newcommand{\mT}{\mathcal{T}}

\newcommand{\Ad}{\text{Ad}}
\newcommand{\ad}{\text{ad}}


\newcommand{\mf}{\mathfrak}

\newcommand{\Z}{\mathbb{Z}}

\newcommand{\h}{\mathfrak{h}}
\newcommand{\mk}{\mathfrak{k}}
\newcommand{\n}{\mathfrak{n}}
\newcommand{\be}{\mathfrak{b}}

\numberwithin{equation}{section}

\begin{document}

\title[Classical reflection equation]{Integrable systems from the classical reflection equation}
\author{Gus Schrader}
\address{Department of Mathematics, University of California, Berkeley, CA
94720, USA}
\maketitle

\begin{abstract}
We construct integrable Hamiltonian systems on $G/K$, where $G$ is a coboundary Poisson Lie group and $K$ is a Lie subgroup arising as the fixed point set of a group automorphism $\sigma$ of $G$ satisfying the classical reflection equation.  We show that the time evolution of these systems is described by a Lax equation, and under a factorizability assumption on $G$, present its solution in terms of a factorization problem in $G$.   Our construction is closely related to the semiclassical limit of Sklyanin's integrable quantum spin chains with reflecting boundaries.  
\end{abstract}
\maketitle

\section{Introduction}
A large and well-studied class of integrable Hamiltonian systems consists of those whose phase space can be realized as a Poisson submanifold of a coboundary Poisson-Lie group $G$.  In this situation, the conjugation invariant functions $I_G\subset C(G)$ form a Poisson commutative subalgebra, and particular integrable systems arise by restricting these functions to symplectic leaves in $G$.  

In this paper we construct integrable systems on Poisson homogeneous spaces of the form $G/K$, where $(G,r)$ is a coboundary Poisson Lie group and $K$ is a Lie subgroup of $G$ which arises as the fixed point set of a Lie group automorphism $\sigma:G\rightarrow G$.   In this setting, the condition for $G/K$ to inherit a Poisson structure from $G$ is equivalent to the requirement that the quantity $$C_\sigma(r)=\left(\sigma\otimes\sigma\right)(r)+r-(\sigma\otimes1+1\otimes\sigma)(r)$$ be a $\mathrm{Lie}(K)$-invariant in $\mathfrak{g}\otimes\mathfrak{g}$.  In the special case $C_\sigma(r)=0$, we say that the triple $(\mathfrak{g},r,\sigma)$ is a solution of the {\em classical reflection equation} (CRE).  In this case, we construct a {\em classical reflection monodromy matrix } $\mT$ with the property that the classical reflection transfer matrices obtained by taking the trace of $\mT$ in finite dimensional representations of $G$ form a Poisson commuting family of functions in $C(G/K)\subset C(G)$.  These functions are no longer $\Ad_G$-invariant, but are instead bi-invariant under the action of $K\times K$ on $G$ by left and right translations.  

\bigskip

The motivation for our construction comes from the quantum spin chains with reflecting boundary conditions introduced by Sklyanin \cite{sklyaninref}. It is known \cite{molev} that the algebra of observables of these quantum integrable systems are closely related to coideal subalgebras in the quantum affine algebras $U_q(\widehat{\mathfrak{g}})$.  When $\mathfrak{g}$ is a finite dimensional simple Lie algebra, coideal subalgebras in $U_q(\mathfrak{g})$ have been studied by many authors \cite{noumi},\cite{letzer},\cite{gav},\cite{stokman} and in many cases may be regarded as quantizations of the classical symmetric spaces $G/K$.   Of particular relevance to the present paper is the work of Belliard, Cramp\'{e} and Regelskis \cite{BC},\cite{BR}, who introduced the notion of {\em Manin triple twists} to understand the semiclassical limit of coideal subalgebras.  

\bigskip

We will show that the semiclassical limit of Sklyanin's quantum reflection equation coincides with the CRE for an appropriate choice of group $G$ and automorphism $\sigma$.  We shall also explain how to derive local Hamiltonians for the corresponding homogeneous classical spin chain.  

\bigskip

  The paper is organized as follows.  In section 2, we recall some standard facts about Poisson-Lie groups and their Lie bialgebras. In section 3, we study coideal Lie subalgebras in coboundary Poisson-Lie groups, and introduce the classical reflection equation.   In section 4, we define the reflection monodromy matrix, and show how it can be used to construct a Poisson-commutative subalgebra in $C(G/K)$.    In section 5,  we explain a certain twisting of this construction. Section 6 is devoted to the equations of motion of such systems, which are shown to be of Lax form.  We also provide their explicit solution in the case $G$ admits a certain analog of the Iwasawa decomposition.  The final two sections are devoted to examples.  In section 7, we apply our formalism to the case when $G$ is the split real semisimple Lie group $SL_{n+1}(\bR)$, and $K=SO_{n+1}(\bR)$ is the set of elements fixed by the Cartan involution.  Finally in section 8, we apply our general scheme to the loop algebra $Lgl_2$, and thereby recover the semiclassical limit of a quantum integrable system discovered by Sklyanin, the XXZ spin chain with reflecting boundary conditions.  

\bigskip

{\bf Acknowledgements. } I would like to thank Nicolai Reshetikhin for his suggestion to study the semiclassical limit of spin chains with reflecting boundaries, and for his close reading of several drafts of the text, as well as Alexander Shapiro and Harold Williams for many valuable discussions related to integrable systems and Poisson geometry. 
I would also like to thank L\'aszl\'o Feh\'er and Michael Gekhtman for very helpful correspondence regarding the Toda lattice example treated in Section 7.  
 Finally, I wish to thank the referee for many helpful suggestions, and in particular for pointing out how the Poisson structure on the dual group $G^*$ fits into our framework, as discussed in section 7.  This work was supported by a Fulbright International Science and Technology Award.

\section{Lie bialgebras and Poisson-Lie groups.  }
In this section we collect some standard definitions and facts about Poisson-Lie groups and their Lie bialgebras that we shall use throughout the paper.  For further details, see for example \cite{CP,ES}.  Recall that a {\em Lie bialgebra} is a Lie algebra $\mathfrak{g}$ together with a linear map $\delta: \mathfrak{g}\rightarrow\mathfrak{g}\wedge\mathfrak{g}$ satisfying the following two conditions:
\begin{enumerate}
\item The dual mapping $\delta^*:\mathfrak{g}^*\wedge\mathfrak{g}^*\rightarrow\mathfrak{g}^*$ defines a Lie bracket on $\mathfrak{g}^*$
\item The map $\delta$ satisfies the 1-cocycle condition
$$
\delta([X,Y])=X\cdot\delta(Y) - Y\cdot \delta(X)
$$
where the $\mathfrak{g}$ acts on $\mathfrak{g}\wedge\mathfrak{g}$ by the exterior square of the adjoint representation.
\end{enumerate} 

We shall focus on Lie bialgebras for which the 1-cocycle $\delta$ is actually a coboundary.  This means that there exists an element $r\in \mathfrak{g}\otimes\mathfrak{g}$ such that 
\begin{align}
\label{coboundary}
\delta(X)=X\cdot r
\end{align} 
Note that two elements $r,r'\in\mathfrak{g}\otimes\mathfrak{g}$ that differ by a $\mathfrak{g}$-invariant $\Omega\in \left(\mathfrak{g}\otimes\mathfrak{g}\right)^\mathfrak{g}$ define the same Lie bialgebra structure on $\mathfrak{g}$.   
 One checks \cite{CP} that the induced bracket on $\mathfrak{g}^*$ will be skew and satisfy the Jacobi identity if and only if the symmetric part $J=\frac{1}{2}(r+r_{21})$ of $r$ as well as the quantity
\begin{align}
[[r,r]]:= [r_{12},r_{13}]+[r_{13},r_{23}]+[r_{12},r_{23}]
\end{align}
are $\mathfrak{g}$-invariants in $\mathfrak{g}\otimes\mathfrak{g}$ and $\mathfrak{g}\otimes\mathfrak{g}\otimes \mathfrak{g}$ respectively. If the Lie bialgebra structure on $\mathfrak{g}$ is determined by an element $r$ satisfying $[[r,r]]=0$,  we say $r$ is a solution of the {\em classical Yang-Baxter equation}, and that $(\mathfrak{g},r)$ is a {\em quasitriangular Lie bialgebra.} If $(\mathfrak{g},r)$ is a quasitriangular Lie bialgebra and the symmetric bilinear form on $\mathfrak{g}^*$ defined by $J$ is nondegenerate, we say that $(\mathfrak{g},r)$ is {\em factorizable.}  When $(\mathfrak{g},r)$ is factorizable, the element $J$ establishes an isomorphism $\mathfrak{g}^*\simeq\mathfrak{g}$, which we may use to identify both $J$ itself as well as the skew-part of the $r$-matrix $\hat{r}=\frac{1}{2}(r-r_{21})$ with operators on $\mathfrak{g}$.  Under this identification, these operators satisfy the {\em modified classical Yang-Baxter equation}
\begin{align}
[\hat{r}(X), \hat{r}(Y)]=\hat{r}\left([\hat{X},Y]+[X,\hat{r}(Y)]\right)-[J(X),J(Y)]
\end{align}
\bigskip

A {\em Poisson-Lie group} is Lie group equipped with a Poisson structure such that the group multiplication is a Poisson map. As is well known \cite{CP}, the category of Lie bialgebras is equivalent to the category of connected, simply connected Poisson-Lie groups.  The Lie bialgebra corresponding to a given Poisson-Lie group is called its {\em tangent Lie bialgebra}.  We say that a Poisson-Lie group $G$ is quasitriangular (resp. factorizable) if its tangent Lie bialgebra is.  
 
\bigskip
The Poisson bracket on a coboundary Poisson-Lie group $(G,r)$ may be described quite explicitly.  If $f_1,f_2\in \bC[G]$, we have
\begin{equation}
\label{sklyaninbracket}
\{f_1,f_2\}=\langle df_1\otimes df_2,r^L-r^R\rangle
\end{equation}
where if $X\in\mathfrak{g}$, $X^{L,R}$ denote left/right derivatives with respect to $X$:
$$
X^L[f] (g)= \frac{d}{dt}\bigg|_{t=0}f(e^{tX}g), \ \ X^R[f] (g)= \frac{d}{dt}\bigg|_{t=0}f(ge^{tX})
$$
 
\section{Poisson homogeneous spaces and the classical reflection equation. }
 Let $\mathfrak{g}$ be a Lie bialgebra, and let $\mk\subset\mathfrak{g}$ be a Lie subalgebra in $\mathfrak{g}$.  We say that $\mk$ is a {\em coideal Lie subalgebra} in $\mathfrak{g}$ if
$$
\delta(\mk) \subset \mathfrak{g}\otimes\mk+\mk\otimes\mathfrak{g}
$$
Our interest in coideal Lie subalgebras stems from the following fact, whose straightforward proof may be found in \cite{ES}.  
\begin{prop}
\label{homogeneous}
Let $K\subset G$  be a closed Lie subgroup in $G$. Then the homogeneous space $G/K$ inherits a unique Poisson structure from $G$ such that the natural projection $\pi:G\rightarrow G/K$ is Poisson if and only if $\mk=\mathrm{Lie}(K)$ is a coideal Lie subalgebra in $\mathfrak{g}$.  
\end{prop}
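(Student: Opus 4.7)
The plan is to translate the descent question into an infinitesimal condition on the cobracket $\delta$, using the explicit Sklyanin formula (\ref{sklyaninbracket}) for the Poisson bracket on $G$. Assuming $K$ connected (so that right-$K$-invariance of $f\in C(G)$ becomes the Lie-algebra condition $X^R[f]=0$ for all $X\in\mk$), I note that $G/K$ inherits a Poisson structure for which $\pi:G\to G/K$ is Poisson if and only if the bracket on $G$ preserves the subalgebra $C(G/K)\subset C(G)$ of right-$K$-invariant functions. The task thus reduces to computing $X^R\{f_1,f_2\}$ for $f_j\in C(G/K)$ and $X\in\mk$ and showing that its vanishing is equivalent to the coideal condition.

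For the computation I would write $r=\sum_i a_i\otimes b_i$ and apply $X^R$ to (\ref{sklyaninbracket}) via the Leibniz rule. The $r^L$ piece involves only the derivatives $a_i^L,b_i^L$, which commute with $X^R$; combined with $X^R[f_j]=0$ this part drops out. For the $r^R$ piece, the commutators $[X^R,a_i^R]=[X,a_i]^R$, together again with $X^R[f_j]=0$, yield
$$
X^R\{f_1,f_2\}\ =\ -\bigl\langle df_1\otimes df_2,\ \bigl((\ad X)\otimes 1+1\otimes(\ad X)\bigr)(r)^R\bigr\rangle\ =\ -\bigl\langle df_1\otimes df_2,\ \delta(X)^R\bigr\rangle,
$$
where the second equality is the coboundary relation (\ref{coboundary}) defining $\delta(X)=X\cdot r$.

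To conclude, I would observe that at each $g\in G$ the right-$K$-invariance of $f_j$ says exactly that the left-translate $(L_{g^{-1}})^*df_j(g)\in\mathfrak{g}^*$ lies in the annihilator $\mk^\perp$. By choosing $f_1,f_2\in C(G/K)$ whose differentials at a fixed $g$ realize arbitrary prescribed elements of $\mk^\perp$ --- the step I expect to require the most care, although it is standard once one extends functions from a local slice transversal to the $K$-orbits --- one sees that the identity $X^R\{f_1,f_2\}\equiv 0$ holds on $C(G/K)$ for every $X\in\mk$ if and only if $\delta(X)\in(\mk^\perp\otimes\mk^\perp)^\perp=\mathfrak{g}\otimes\mk+\mk\otimes\mathfrak{g}$ for every $X\in\mk$. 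This is exactly the coideal condition $\delta(\mk)\subset\mathfrak{g}\otimes\mk+\mk\otimes\mathfrak{g}$, completing the equivalence.
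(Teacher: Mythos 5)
The paper does not actually supply a proof of this proposition: it is quoted with a pointer to \cite{ES}, and the only trace of an argument in the text is the dual reformulation recorded in the paragraph immediately following (that $\bC[G/K]$ is a Poisson subalgebra iff $\mk^\perp$ is a Lie subalgebra of $\mathfrak{g}^*$). Your computation is the standard proof of exactly that reformulation, and it is correct: $X^R$ commutes with the left derivatives $a_i^L,b_i^L$, the commutators $[X^R,a_i^R]=[X,a_i]^R$ give
$X^R\{f_1,f_2\}(g)=-\langle\xi_1\otimes\xi_2,\delta(X)\rangle$ with $\xi_j=(L_g)^*df_j(g)\in\mk^\perp$, and since $[\xi_1,\xi_2]_{\mathfrak{g}^*}=\delta^*(\xi_1\otimes\xi_2)$ this vanishing for all such data is precisely the statement that $\mk^\perp$ is closed under the dual bracket, equivalently that $\mk$ is a coideal. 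The surjectivity step you single out (realizing arbitrary covectors in $\mk^\perp$ as differentials of $K$-invariant functions) is indeed the only point needing care and is handled by the slice argument you indicate, $\pi$ being a submersion. Two caveats on generality are worth recording. First, by invoking the Sklyanin formula (\ref{sklyaninbracket}) you prove the proposition only for coboundary $G$, whereas it is stated for an arbitrary Poisson--Lie group; the general argument runs the same computation with $r^L-r^R$ replaced by the multiplicative bivector $\eta$ and uses $\delta=d_e\eta$. This costs nothing in this paper, where every group considered is coboundary. Second, your argument shows that $\{f_1,f_2\}$ is invariant under the identity component $K^0$ only; for disconnected $K$ one needs the additional input of multiplicativity, $\eta(gk)=(R_k)_*\eta(g)+(L_g)_*\eta(k)$, and even then the infinitesimal coideal condition controls $\eta(k)$ only on $K^0$, so the connectedness hypothesis you flag is genuinely used and is implicitly assumed in the proposition as stated.
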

Note that this condition is weaker than the condition that the subgroup $K$ be a Poisson submanifold of $G$, for which we  require $\eta(k)\subset T_kK^{\otimes2}\subset T_kG^{\otimes2}$ for all $k\in K$.  

\bigskip

 It is worthing noting that Proposition \ref{homogeneous} may also be stated in a dual form at the level of function algebras; in this formulation, which goes back to \cite{semenovref}, the subalgebra $\bC[G/K]$ of right $K$-invariant functions on $G$ is a Poisson subalgebra in $\bC[G]$ if and only if the annihilator $\mk^\perp\subset \mathfrak{g}^*$ of $\mk$ is a Lie subalgebra in the dual Lie bialgebra $\mathfrak{g}^*$.

\bigskip

Suppose now that $(\mathfrak{g},r)$ is a coboundary Lie bialgebra, and that $\sigma:\mathfrak{g}\rightarrow\mathfrak{g}$ is a Lie algebra automorphism.   Then the fixed point set $$\mk=\mathfrak{g}^\sigma=\{x\in\mathfrak{g} \big | \sigma(x)=x \}$$ is a Lie subalgebra of $\mathfrak{g}$. Our first goal is to characterize when $\mk$ is a coideal Lie subalgebra.  For this purpose, we shall introduce the quantity 
\begin{equation}
\label{invariant}
C_\sigma(r)=(\sigma\otimes\sigma)(r)+r-\left(\sigma\otimes1+1\otimes\sigma\right)(r)
\end{equation}
Note that if $\Omega\in\mathfrak{g}\otimes\mathfrak{g}$ is any $\mathfrak{g}$-invariant, then $C_\sigma(\Omega)$ is a $\mk$-invariant.  In particular, we have that $C_\sigma(J)$ is a $\mk$-invariant, where $J=\frac{1}{2}(r+r_{21})$ denotes the symmetric part of $r$.  
\begin{thm}
Let $(\mathfrak{g},r)$ be the  coboundary Lie bialgebra determined by an element $r\in\mathfrak{g}\otimes \mathfrak{g}$. Then the Lie subalgebra $\mk=\mathfrak{g}^\sigma$ is a coideal Lie subalgebra in $(\mathfrak{g},r)$ if and only if the quantity $C_\sigma(r)$
is a $\mk$-invariant in $\mathfrak{g}\otimes\mathfrak{g}$.  
\end{thm}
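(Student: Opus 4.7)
The plan is to recognize $C_\sigma(r)$ as $\bigl((1-\sigma)\otimes(1-\sigma)\bigr)(r)$ and exploit this factorization. Writing $\phi=1-\sigma$, note that $\phi$ vanishes on $\mk$ by definition of $\mk=\mathfrak{g}^\sigma$, and its kernel is \emph{exactly} $\mk$. Hence the induced map $\bar\phi:\mathfrak{g}/\mk\hookrightarrow \mathfrak{g}$ is injective, so the composition $\phi\otimes\phi:\mathfrak{g}\otimes\mathfrak{g}\to\mathfrak{g}\otimes\mathfrak{g}$ factors through $(\mathfrak{g}/\mk)\otimes(\mathfrak{g}/\mk)$ via an injection. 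Since tensoring preserves kernels of quotient maps, one gets the key identity
$$\ker(\phi\otimes\phi)=\mk\otimes\mathfrak{g}+\mathfrak{g}\otimes\mk,$$
which recasts the coideal condition $\delta(X)\in\mk\otimes\mathfrak{g}+\mathfrak{g}\otimes\mk$ as the requirement $(\phi\otimes\phi)(X\cdot r)=0$.

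Next I would use that $\sigma$ is a Lie algebra automorphism to commute it past the adjoint action of $X\in\mk$. Explicitly, for any $\tau\in\{\sigma\otimes 1,\ 1\otimes\sigma,\ \sigma\otimes\sigma\}$ and any $X\in\mk$, the relation $\sigma(X)=X$ gives
$$\tau(X\cdot r)=X\cdot\tau(r).$$
Expanding $(\phi\otimes\phi)(X\cdot r)=X\cdot r-(\sigma\otimes 1)(X\cdot r)-(1\otimes\sigma)(X\cdot r)+(\sigma\otimes\sigma)(X\cdot r)$ and collecting terms then yields the clean identity
$$(\phi\otimes\phi)(X\cdot r)=X\cdot C_\sigma(r).$$

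Combining the two preceding observations finishes the argument: the coideal condition holds for all $X\in\mk$ if and only if $X\cdot C_\sigma(r)=0$ for all $X\in\mk$, i.e.\ precisely when $C_\sigma(r)$ is a $\mk$-invariant. The only real subtlety is the tensor-product kernel computation in the first step — one needs that $\bar\phi\otimes\bar\phi$ is injective, which holds because over a field the tensor product of injective linear maps is injective. The commutation identity in the second step is a one-line consequence of the automorphism property, and the rest is formal bookkeeping.
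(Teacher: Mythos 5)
Your proof is correct and follows essentially the same route as the paper: the paper also identifies $\mathfrak{g}\otimes\mk+\mk\otimes\mathfrak{g}$ as the kernel of $(\sigma-1)\otimes(\sigma-1)$ and then shows $\bigl((\sigma-1)\otimes(\sigma-1)\bigr)\circ\delta(x)=[x,C_\sigma(r)]$ for $x\in\mk$, which is exactly your identity $(\phi\otimes\phi)(X\cdot r)=X\cdot C_\sigma(r)$. The only difference is presentational: the paper verifies this by expanding $r=\sum_i a_i\otimes b_i$ in a basis, whereas you obtain it by observing that $\sigma\otimes 1$, $1\otimes\sigma$ and $\sigma\otimes\sigma$ commute with the adjoint action of $\mk$ -- a tidier packaging of the same computation.
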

\begin{proof}
The subspace $\mathfrak{g}\otimes\mk+\mk\otimes\mathfrak{g}\subset\mathfrak{g}\otimes\mathfrak{g}$ is the kernel of the vector space endomorphism $A=(\sigma-1)\otimes(\sigma-1)$. So $\mk$ is a coideal Lie subalgebra if and only if $A\circ\delta(\mk)=0$.  Let us expand the $r$-matrix as $r=\sum_i a_i\otimes b_i$ for some $a_i,b_i\in\mathfrak{g}$, and take $x\in\mk$.  Then we compute
\begin{align*}
A\circ\delta(x)&=A([x,a_i]\otimes b_i+a_i\otimes[x,b_i])\\
&=\sigma[x,a_i]\otimes \sigma b_i+[x,a_i]\otimes b_i - \sigma[x,a_i]\otimes b_i -[x,a_i]\otimes \sigma b_i\\
&+\sigma a_i\otimes\sigma [x,b_i]+a_i\otimes[x,b_i]-\sigma a_i\otimes[x,b_i]-a_i\otimes\sigma [x,b_i]\\
&=[x,\sigma a_i]\otimes\sigma b_i+[x,a_i]\otimes b_i-[x,\sigma a_i]\otimes b_i-[x,a_i]\otimes \sigma b_i \\
&+\sigma a_i\otimes [x,\sigma b_i]+a_i\otimes[x,b_i]-\sigma a_i\otimes [x,b_i]-a_i\otimes [x,\sigma b_i]\\
&=[x,(\sigma\otimes\sigma)r]+[x,r]-[x,(\sigma\otimes 1)r]-[x,(1\otimes\sigma)r]\\
&=[x,C_\sigma(r)]
\end{align*}
This is zero if and only if $C_\sigma(r)$ is a $\mk$-invariant.  
\end{proof}
\begin{cor} If $C_\sigma(r)$ is a $\mk$-invariant, the corresponding homogeneous space $G/K$ inherits the structure of a Poisson manifold.  
\end{cor}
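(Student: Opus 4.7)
The plan is to deduce the corollary directly by chaining the preceding theorem with Proposition \ref{homogeneous}. The theorem just established gives the equivalence
\[
C_\sigma(r) \text{ is } \mk\text{-invariant} \iff \mk = \mathfrak{g}^\sigma \text{ is a coideal Lie subalgebra of } (\mathfrak{g}, r),
\]
and Proposition \ref{homogeneous} asserts that $G/K$ inherits a (necessarily unique) Poisson structure from $G$ making $\pi\colon G \to G/K$ Poisson precisely when $\mathrm{Lie}(K)$ is a coideal Lie subalgebra. So the main task is simply to verify that the hypotheses of Proposition \ref{homogeneous} are met for $K = G^\sigma$, namely that $K$ is a closed Lie subgroup of $G$ and that $\mathrm{Lie}(K) = \mk$.

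For the first point, I would observe that $K = G^\sigma = \{g \in G : \sigma(g) = g\}$ is the equalizer of the continuous maps $\sigma$ and $\mathrm{id}_G$, hence is a closed subgroup of $G$, and therefore a closed Lie subgroup by Cartan's closed-subgroup theorem. For the second, I would recall that the Lie algebra of the fixed-point subgroup of a Lie group automorphism $\sigma$ coincides with the fixed-point subalgebra of the differential $d\sigma = \sigma\colon \mathfrak{g} \to \mathfrak{g}$, so $\mathrm{Lie}(K) = \mathfrak{g}^\sigma = \mk$. With these identifications in place, the hypothesis $C_\sigma(r) \in (\mathfrak{g}\otimes\mathfrak{g})^\mk$ combined with the theorem gives the coideal condition on $\mathrm{Lie}(K)$, and Proposition \ref{homogeneous} then delivers the desired Poisson structure on $G/K$.

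There is no real obstacle here: the corollary is essentially a relabeling of the theorem via the dictionary of Proposition \ref{homogeneous}. The only point requiring any care is the identification $\mathrm{Lie}(G^\sigma) = \mathfrak{g}^\sigma$, which is standard.
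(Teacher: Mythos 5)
Your argument is exactly the one the paper intends: the corollary is an immediate combination of the preceding theorem with Proposition \ref{homogeneous}, and the paper leaves precisely this chaining implicit. Your extra care in verifying that $K=G^\sigma$ is a closed Lie subgroup and that $\mathrm{Lie}(G^\sigma)=\mathfrak{g}^\sigma$ is correct and fills in the only details the paper omits.
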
 
The simplest way to satisfy this condition is to demand that $C_\sigma(r)=0$ for some element $r$ defining the Lie bialgebra structure on $\mathfrak{g}$.  This leads us to the following definition.  
\begin{defn}
A solution of the {\em classical reflection equation} is a triple $(\mathfrak{g},r,\sigma)$ where $\mathfrak{g}$ is coboundary Lie bialgebra, $r\in\mathfrak{g}\otimes\mathfrak{g}$ is an element defining the Lie bialgebra structure on $\mathfrak{g}$ via formula (\ref{coboundary}), and $\sigma$ is a Lie algebra automorphism of $\mathfrak{g}$ such that
\begin{align}
\label{CRE}
(\sigma\otimes\sigma) (r) +r-(\sigma\otimes 1+1\otimes\sigma)(r)=0 
\end{align}
\end{defn}

We will mostly be interested in the case when the automorphism $\sigma$ is an an involution, i.e. $\sigma^2=\mathrm{id}$, or slightly more generally, in the case when $\sigma$ is an automorphism of finite order. When $\sigma$ has finite order, we have the decomposition of $\mk$-modules $\mathfrak{g}=\mk\oplus\mathfrak{p}$, where  $\mathfrak{p}=\oplus _i\mathfrak{p}_i$ is direct sum of the eigenspaces of $\sigma$ with eigenvalues $\lambda_i\neq 1$.    If we decompose the $r$-matrix as
\begin{equation}
\label{decomp}
r=r_{\mk\mk}+\sum_{i}\left(r_{\mk\mathfrak{p}_i}+r_{\mathfrak{p_i}\mk}\right)+\sum_{i,j}r_{\mathfrak{p_i}\mathfrak{p_j}}
\end{equation}
we find $C_\sigma(r)=\sum_{i,j}(\lambda_i-1)(\lambda_j-1)r_{\mathfrak{p_i}\mathfrak{p_j}}$.  Setting $r_{\mf{p}\mf{p}}=\sum_{i,j}r_{\mathfrak{p_i}\mathfrak{p_j}}$, we have
\begin{prop}
If $\sigma$ has finite order, $\mk$ is a coideal Lie subalgebra if and only if $r_{\mf{p}\mf{p}}$ is $\mk$-invariant.  The triple $(\mathfrak{g}, r,\sigma)$ is a solution of the CRE if and only if $r_{\mf{p}\mf{p}}=0$.  
\end{prop}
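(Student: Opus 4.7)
The plan is to combine the preceding theorem (which tells us that $\mk$ is coideal iff $C_\sigma(r)$ is $\mk$-invariant) with the explicit formula $C_\sigma(r)=\sum_{i,j}(\lambda_i-1)(\lambda_j-1)r_{\mathfrak{p}_i\mathfrak{p}_j}$ that has already been derived directly above the proposition. Both halves of the proposition will follow once we observe that the decomposition (\ref{decomp}) is not just a vector space decomposition of $\mathfrak{g}\otimes\mathfrak{g}$ but a decomposition as $\mk$-modules, so that each summand can be analyzed independently.

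First I would record the key $\mk$-equivariance property: for any $x\in\mk$ and $y\in\mathfrak{p}_i$, the identity $\sigma([x,y])=[\sigma(x),\sigma(y)]=[x,\lambda_i y]=\lambda_i[x,y]$ shows that $[x,y]\in\mathfrak{p}_i$. Hence each eigenspace $\mathfrak{p}_i$ is a $\mk$-submodule of $\mathfrak{g}$ under the adjoint action, and consequently each $\mathfrak{p}_i\otimes\mathfrak{p}_j\subset\mathfrak{g}\otimes\mathfrak{g}$ is a $\mk$-submodule. Since the $\mathfrak{p}_i\otimes\mathfrak{p}_j$ together with the mixed summands $\mk\otimes\mk$, $\mk\otimes\mathfrak{p}_i$, $\mathfrak{p}_i\otimes\mk$ are linearly independent subspaces, any element of $\mathfrak{g}\otimes\mathfrak{g}$ written in this decomposition is $\mk$-invariant if and only if each of its components is.

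Applying this to $C_\sigma(r)=\sum_{i,j}(\lambda_i-1)(\lambda_j-1)r_{\mathfrak{p}_i\mathfrak{p}_j}$: the coefficients $(\lambda_i-1)(\lambda_j-1)$ are nonzero for all $i,j$ since $\lambda_i,\lambda_j\neq 1$, so $[x,C_\sigma(r)]=0$ in $\mathfrak{g}\otimes\mathfrak{g}$ is equivalent to $[x,r_{\mathfrak{p}_i\mathfrak{p}_j}]=0$ for every pair $(i,j)$, which in turn is equivalent to $[x,r_{\mathfrak{pp}}]=0$. This, combined with the previous theorem, proves the first assertion. For the second, the same linear independence of the subspaces $\mathfrak{p}_i\otimes\mathfrak{p}_j$ forces $C_\sigma(r)=0$ if and only if each $(\lambda_i-1)(\lambda_j-1)r_{\mathfrak{p}_i\mathfrak{p}_j}=0$, i.e.\ $r_{\mathfrak{p}_i\mathfrak{p}_j}=0$ for all $i,j$, which is precisely $r_{\mathfrak{pp}}=0$.

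There is no real obstacle here since the formula for $C_\sigma(r)$ has already been established; the only conceptual point to be careful about is that one must verify the $\mk$-stability of the individual $\mathfrak{p}_i\otimes\mathfrak{p}_j$ summands in order to pass from invariance of the whole sum to invariance of each piece. This in turn relies crucially on $\mk$ being the full fixed-point algebra of the automorphism $\sigma$, rather than merely an arbitrary Lie subalgebra of $\mathfrak{g}$.
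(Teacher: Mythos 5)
Your proposal is correct and follows exactly the route the paper intends: the paper states this proposition as an immediate consequence of the preceding theorem together with the displayed identity $C_\sigma(r)=\sum_{i,j}(\lambda_i-1)(\lambda_j-1)r_{\mathfrak{p}_i\mathfrak{p}_j}$, offering no further proof. The only step you spell out that the paper leaves implicit is the $\mk$-stability of each summand $\mathfrak{p}_i\otimes\mathfrak{p}_j$ (which lets you pass from invariance of the sum to invariance of each piece, using that the coefficients $(\lambda_i-1)(\lambda_j-1)$ are nonzero), and you handle that correctly.
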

{\bf Remark. } In this work, we only consider Poisson structures on $G/K$ with the property that the projection $G\rightarrow G/K$ is a Poisson map.  In \cite{drinfeld}, Drinfeld classifies Poisson structures on $G/K$ compatible with the Poisson structure on $G$ in the sense that the mapping $G\times G/K\rightarrow G/K$ is Poisson. Such Poisson structures are shown to correspond to Lagrangian subalgebras in the double $D(\mathfrak{g})=\mathfrak{g}\oplus\mathfrak{g}^*$, those under consideration in the present paper being given by Lagrangians $L_\mk=\mk\oplus\mk^\perp $.  It would be interesting to try to construct integrable systems on the more general class of Poisson homogeneous spaces classified in \cite{drinfeld}.

\bigskip

{\bf Remark. }Let us conclude this section by commenting on the relation between our construction and the one outlined in \cite{BC} in terms of Manin triple twists.  Suppose we have a quasitriangular Lie bialgebra $(\mathfrak{g},r)$, with the corresponding Manin triple $\mathfrak{d}=\mathfrak{g}\oplus \mathfrak{g}^*$, and suppose also that we have a Lie algebra involution $\phi$ of $\mathfrak{g}$. We may then attempt to extend $\phi$ to an anti-invariant Manin triple twist simply by declaring that $\sigma$ acts on $\mathfrak{g}^*$ by
$$(\phi(\xi),X)=-\langle \xi,\phi(X)\rangle$$
where the round brackets to refer to the invariant symmetric bilinear form on $\mathfrak{d}$ and the angle brackets to refer to the canonical pairing of $\mathfrak{g}$ with $\mathfrak{g}^*$.  What must be checked is that this extension respects the Lie algebra structure on $\mathfrak{d}$. In order for this to hold, we require that $(\phi\otimes\phi)(r)+r=0$.  Note, by applying $\phi\otimes1$, that this condition also implies $(\phi\otimes1+1\otimes\phi)(r)=0$.    Therefore, the Manin triple twist constructed in this fashion gives a solution of the classical reflection equation.  The solutions constructed in this fashion have the special property that both `sides' of the reflection equation vanish in their own right.

\section{Construction of integrable systems.  }
Coboundary Poisson-Lie groups play a prominent role in the theory of classical integrable systems because of the following simple consequence of formula (\ref{sklyaninbracket}) for the Poisson bracket on $G$.  

\begin{prop}\cite{RSTS}
If $(G,r)$ is a coboundary Poisson-Lie group, then the subspace $I_G\subset C(G)$ of conjugation-invariant functions is a Poisson commutative subalgebra.  
\end{prop}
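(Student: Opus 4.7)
The plan is to read off the result essentially directly from the explicit Sklyanin bracket formula (\ref{sklyaninbracket}), once one observes that conjugation-invariance of $f$ forces $X^L f = X^R f$ as functions on $G$ for every $X\in\mathfrak{g}$.

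First I would verify this key identity.  Write the conjugation-invariance condition as $f(e^{tX} g e^{-tX}) = f(g)$ for all $t\in\bR$, $X\in\mathfrak{g}$, $g\in G$, and differentiate at $t=0$.  By the chain rule, the tangent vector
\[
\frac{d}{dt}\bigg|_{t=0} e^{tX} g e^{-tX} \in T_gG
\]
equals $X^L(g) - X^R(g)$, where $X^L$ and $X^R$ denote the vector fields generating left and right translation in the conventions of the excerpt.  Applying $df_g$ and using invariance, we obtain $X^L f(g) = X^R f(g)$ for all $g\in G$.

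Next, expand $r = \sum_i a_i \otimes b_i$ and plug into (\ref{sklyaninbracket}).  For any pair $f_1,f_2\in I_G$, the preceding identity applied to each $a_i$ and $b_i$ separately gives
\[
\sum_i (a_i^L f_1)(b_i^L f_2) = \sum_i (a_i^R f_1)(b_i^R f_2),
\]
so that $\{f_1,f_2\} = \langle df_1\otimes df_2,\, r^L - r^R\rangle$ vanishes identically on $G$.  Finally, $I_G$ is manifestly closed under pointwise addition and multiplication, so it is a subalgebra of $C(G)$, and by the above it is Poisson commutative.

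There is no real obstacle here: the whole content is the identity $X^L f = X^R f$ on $I_G$, which is a one-line computation from the definition of conjugation-invariance.  The nontriviality of the statement lies entirely in having the Sklyanin formula (\ref{sklyaninbracket}) at one's disposal — once it is in hand, the bracket is manifestly a difference of a left-invariant and a right-invariant expression, and $I_G$ is precisely where those two expressions agree.
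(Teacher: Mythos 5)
Your proof is correct and is exactly the argument the paper has in mind: the proposition is stated as a ``simple consequence of formula (\ref{sklyaninbracket})'' with the proof deferred to \cite{RSTS}, and the standard argument there is precisely your observation that conjugation-invariance gives $X^Lf=X^Rf$, so the two terms of the Sklyanin bracket cancel on $I_G$. Nothing to add.
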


Restricting this Poisson commutative subalgebra of functions to symplectic leaves of appropriate dimension in $G$, it is often possible to obtain classical integrable systems.  Examples of integrable systems that can be derived in this framework include the Coxeter-Toda lattice \cite{kolyafour},  its affine counterpart \cite{harold}, and the classical XXZ spin chain with periodic boundaries, see the survey \cite{sixv} and references therein.  We will now explain how to construct integrable systems on the Poisson homogeneous spaces $G/K$ described in the previous section.

In order to describe $K$-invariant functions on $G$ explicitly, we introduce the {\em classical reflection monodromy matrix} \begin{equation}\label{rmm}
\mathcal{T}(g)=g\sigma(g)^{-1}\end{equation}  Observe that if $k\in K=G^\sigma$, we have $\mathcal{T}(gk)=\mathcal{T}(g)$.  Hence matrix elements $\mathcal{T}^V$ of $\mT$ in any finite dimensional representation $V$ are elements of the ring of $K$-invariant functions $C(G/K)$.  Taking the trace, we obtain the {\em reflection transfer matrix} 
\begin{equation}
\label{rtm}\tau^V(g)=tr_V \ \mathcal{T}^V(g)
\end{equation}
In contrast to those arising from the standard construction of integrable systems on $G$, the reflection transfer matrices $\tau^V(g)$ are not in general $\Ad_G$-invariant.  On the other hand, observe that $\tau^V(kg)=\tau^V(g)$, so the reflection transfer matrices lie in $\bC(K\backslash G/K)$, the subalgebra of $K$-bi-invariant functions on $G$.  
\begin{thm}
\label{transfer} Suppose $(\mathfrak{g}, r,\sigma)$ is a solution of the classical reflection equation and $\sigma$ is of finite order. Then the subalgebra $\bC(K\backslash G/K)$ is Poisson commutative.  In particular, for any pair of finite dimensional representations $V,W$ of $G$, the reflection transfer matrices $\tau^V,\tau^W$ satisfy
$$
\{\tau^V,\tau^W\}=0
$$
\end{thm}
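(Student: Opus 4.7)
The plan is to deduce the theorem almost immediately from the Sklyanin formula (\ref{sklyaninbracket}) combined with the characterization of the CRE supplied by the proposition immediately preceding this theorem.

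First I would observe that for any $f \in \bC(K\backslash G/K)$, the $K$-bi-invariance of $f$ forces $X^L f = X^R f = 0$ for every $X \in \mk$. Writing the $r$-matrix as $r = \sum_i a_i \otimes b_i$ and splitting each $a_i = a_i^\mk + a_i^\mf{p}$ and $b_i = b_i^\mk + b_i^\mf{p}$ with respect to the $\sigma$-eigenspace decomposition $\mathfrak{g} = \mk \oplus \mf{p}$, every summand in (\ref{sklyaninbracket}) that contains a $\mk$-component of either $a_i$ or $b_i$ applies a vanishing derivative to the $K$-bi-invariant functions $f_1, f_2$. Hence the Sklyanin bracket on $\bC(K\backslash G/K)$ collapses to the expression one would obtain by replacing $r$ with its $\mf{p}\otimes\mf{p}$-component $r_{\mf{p}\mf{p}} = \sum_i a_i^\mf{p}\otimes b_i^\mf{p}$.

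Next I would invoke the preceding proposition, which asserts that when $\sigma$ has finite order the CRE is equivalent to $r_{\mf{p}\mf{p}} = 0$.  Vanishing of $\{f_1,f_2\}$ for arbitrary $f_1,f_2 \in \bC(K\backslash G/K)$ is then immediate, establishing the Poisson commutativity claim; the statement $\{\tau^V,\tau^W\} = 0$ is a special case, since $\tau^V \in \bC(K\backslash G/K)$ was noted just before the theorem.

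In this approach there is no substantive obstacle, because the technical content has already been absorbed into the preceding proposition.  The finite-order hypothesis enters only to guarantee the $\mk\oplus\mf{p}$ splitting that makes "$r_{\mf{p}\mf{p}} = 0$" a meaningful reformulation of the CRE.  A more computationally involved alternative would be to expand $\{\mT_1,\mT_2\}$ via the Leibniz rule applied to $\mT = g\sigma(g)^{-1}$, using the easily-derived twisted Sklyanin relations $\{g_1,\sigma(g)_2\} = [(1\otimes\sigma)(r), g_1\sigma(g)_2]$ and its cousins, to obtain a "classical reflection equation algebra" identity and then recover $\{\tau^V,\tau^W\} = 0$ by taking the trace and applying cyclicity together with the CRE; but such a calculation is unnecessary given the cleaner infinitesimal argument above, and in any case only produces the weaker "in particular" half of the theorem.
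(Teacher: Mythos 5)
Your argument is correct and is essentially identical to the paper's own proof: the paper likewise notes that $X^{L,R}[H]=0$ for $X\in\mk$ and $H$ bi-invariant, so that only the $r_{\mf{p}\mf{p}}$ part of the decomposition (\ref{decomp}) survives in the Sklyanin bracket, and this vanishes by the preceding proposition. Your write-up just makes explicit the term-by-term cancellation that the paper leaves to the reader.
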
 
\begin{proof}
If $H\in\bC(K\backslash G/K)$ is a bi-invariant function, then for all $X\in\mk$ we have $X^{L,R}[H]=0$.  Hence the result follows from the decomposition (\ref{decomp}) of the $r$-matrix.  
\end{proof}
Poisson commutativity of the reflection transfer matrices also follows immediately from the following expression for the Poisson brackets of matrix elements of the reflection monodromy matrix:

\begin{prop} Let $(\mathfrak{g},r,\sigma)$ be a solution of the classical reflection equation.  Then matrix elements of the reflection monodromy matrix satisfy
\begin{align}
\label{refpb}
\{\mT_1\otimes\mT_2\}=[r,\mT_1\mT_2]+\mT_1[\mT_2,\sigma_1r]+\mT_2[\mT_1,\sigma_2r]
\end{align}

\end{prop}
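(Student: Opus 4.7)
The plan is to apply the Sklyanin bracket formula (\ref{sklyaninbracket}) directly to matrix entries of $\mT = g\sigma(g)^{-1}$, organize the summands into the $r$-quantities $r$, $\sigma_1 r$, $\sigma_2 r$, $(\sigma\otimes\sigma)r$, and then invoke the CRE to rearrange into commutators.

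A direct computation using $\sigma(e^{tX}g) = e^{t\sigma(X)}\sigma(g)$ yields
\[
X^L[\mT] = X\mT - \mT\sigma(X), \qquad X^R[\mT] = gX\sigma(g)^{-1} - g\sigma(X)\sigma(g)^{-1}.
\]
Substituting these into $\{\mT_1,\mT_2\} = \sum_n a_n^L[\mT_1]b_n^L[\mT_2] - \sum_n a_n^R[\mT_1]b_n^R[\mT_2]$, each product expands into four terms. For the left-derivative part, exploiting commutativity of factor-$1$ with factor-$2$ operators, the summation yields
\[
r\,\mT_1\mT_2 - \mT_2(\sigma_2 r)\mT_1 - \mT_1(\sigma_1 r)\mT_2 + \mT_1\mT_2\,((\sigma\otimes\sigma)r).
\]
For the right-derivative part, each of the four summands can be rewritten as $g_1 g_2\,s\,\sigma(g)_1^{-1}\sigma(g)_2^{-1}$ with $s \in \{r,\sigma_1 r,\sigma_2 r,(\sigma\otimes\sigma)r\}$; observing that this equals $(\Ad(g)\otimes\Ad(g))(s)\,\mT_1\mT_2$, the signed total is $(\Ad(g)\otimes\Ad(g))(C_\sigma(r))\,\mT_1\mT_2$, which vanishes by the CRE.

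What remains is the left-derivative expression. Using the CRE in the form $(\sigma\otimes\sigma)r = \sigma_1 r + \sigma_2 r - r$ to rewrite the last term, together with the commutativity $\mT_1\mT_2 = \mT_2\mT_1$, the four terms regroup into $[r,\mT_1\mT_2] + \mT_1[\mT_2,\sigma_1 r] + \mT_2[\mT_1,\sigma_2 r]$. The main obstacle is careful bookkeeping of the matrix-product order: while factor-$1$ and factor-$2$ operators commute, within a single tensor factor elements of $\mathfrak{g}$ and $\mT$ do not, so the relative placement of the $r$-quantities with respect to $\mT_1$ and $\mT_2$ must be tracked throughout. The appealing feature of the argument is that the right-derivative part collapses entirely to $(\Ad(g)\otimes\Ad(g))(C_\sigma(r))\,\mT_1\mT_2$, which CRE sends to zero, leaving only the left-derivative expression to be reorganized.
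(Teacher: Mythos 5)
Your proposal is correct and is essentially the paper's own argument: both are direct computations from the Sklyanin bracket (\ref{sklyaninbracket}) that produce the same four terms $r\mT_1\mT_2$, $-\mT_1(\sigma_1 r)\mT_2$, $-\mT_2(\sigma_2 r)\mT_1$, $\mT_1\mT_2(\sigma\otimes\sigma)r$ plus an $\mathrm{Ad}_g\otimes\mathrm{Ad}_g$-conjugated copy of $C_\sigma(r)$, and both invoke the CRE twice --- once to kill that conjugated term and once to regroup the remainder into commutators (these are exactly the two $C_\sigma(r)$ terms in the paper's equation (\ref{cancel})). The only cosmetic difference is that you compute $X^{L,R}[\mT]$ directly where the paper applies the Leibniz rule to the factorization $\mT=\rho\cdot(\rho^{-1}\circ\sigma)$.
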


\begin{proof}
Recall \cite{CP} that the matrix elements $\rho(g)$ of $g$ in a finite dimensional representation have the Poisson brackets
\begin{align*}
\{\rho_1\otimes \rho_2\}&=[r_{12},\rho_1\rho_2]
\end{align*}
Now since $\sigma$ is a group automorphism, we have
$$
\{\rho_1\circ\sigma,\rho_2\}=[\sigma_1r,(\rho_1\circ\sigma)\rho_2]
$$ et cetera.  
Moreover, since $\mT(g)=\rho(g)\cdot (\rho^{-1}\circ\sigma)(g)$, applying the Leibniz rule yields
\begin{align*}
\{\mT_1\otimes\mT_2\}&=\rho_1\{(\rho_1^{-1}\circ\sigma),\rho_2\}(\rho_2^{-1}\circ\sigma)+\rho_1\rho_2\{\rho_1^{-1}\circ\sigma,\rho_2^{-1}\circ\sigma\}\\
&\ \ +\{\rho_1,\rho_2\}(\rho_1^{-1}\circ\sigma)(\rho_2^{-1}\circ\sigma)+\rho_2\{\rho_1,\rho_2^{-1}\circ\sigma\}(\rho_1^{-1}\circ\sigma)\\
&=r\mT_1\mT_2+\mT_1\mT_2(\sigma_1\sigma_2r)-\mT_1(\sigma_1r)\mT_2-\mT_2(\sigma_2r)\mT_1\\
&+\rho_1\rho_2\left(\sigma_1r+\sigma_2r-r-\sigma_1\sigma_2r\right)(\rho_2^{-1}\circ\sigma)(\rho_1^{-1}\circ\sigma)
\end{align*}
This expression may be rewritten as
\begin{align}
\label{cancel}
\{\mT_1\otimes\mT_2\}&=[r,\mT_1\mT_2]+\mT_1[\mT_2,\sigma_1r]+\mT_2[\mT_1,\sigma_2r]\\ \nonumber
&+\mT_1\mT_2C_\sigma(r)-\rho_1\rho_2C_\sigma(r)(\rho_2^{-1}\circ\sigma)(\rho_1^{-1}\circ\sigma)
\end{align}
Applying the CRE $C_\sigma(r)=0$,  one arrives at formula (\ref{refpb}).  
\end{proof}
We also observe that when $\sigma$ is an involution, the commutativity of the reflection transfer matrices continues to hold under the weaker assumption that $C_\sigma(r)$ is a (possibly nonzero) $\mk$-invariant.  
\begin{prop}
Suppose $\sigma$ is an involution, and $(r,\sigma)$ satisfies the condition (\ref{invariant}): i.e.  $C_\sigma(r)$ is a $\mk$-invariant.  Then the reflection transfer matrices form a Poisson commutative subalgebra.  
\end{prop}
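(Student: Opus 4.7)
The plan is to start from the identity (\ref{cancel}) derived in the preceding proposition,
\[
\{\mT_1\otimes\mT_2\}=[r,\mT_1\mT_2]+\mT_1[\mT_2,\sigma_1 r]+\mT_2[\mT_1,\sigma_2 r] + \mT_1\mT_2\,C_\sigma(r)-\rho_1\rho_2\,C_\sigma(r)(\rho_2^{-1}\circ\sigma)(\rho_1^{-1}\circ\sigma),
\]
and to apply $tr_{V\otimes W}$ to both sides. The first three terms on the right vanish under the trace: the first because $tr[\,\cdot\,,\,\cdot\,]=0$, the other two because $\mT_1$ and $\mT_2$ commute (they act on distinct tensor slots) so that after a single cyclic permutation $tr(\mT_i[\mT_j,\sigma_i r])=tr([\mT_i,\mT_j]\sigma_i r)=0$. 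The Poisson commutativity of $\tau^V$ and $\tau^W$ thus reduces to showing that the remaining contribution
\[
R(g)\,:=\,tr_{V\otimes W}\!\bigl(\rho(\mT(g))\,X \;-\; \rho(g)\,X\,\rho(\sigma(g)^{-1})\bigr)
\]
vanishes for every $g\in G$, where I write $\rho=\rho_{V\otimes W}$ and $X=(\rho_V\otimes\rho_W)(C_\sigma(r))$.

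Using cyclicity of the trace together with the identity $\rho(g)^{-1}\rho(\mT(g))=\rho(\sigma(g))^{-1}$, I would rewrite this as $R(g)=tr\bigl((X-\rho(g)X\rho(g)^{-1})\,\rho(\mT(g))\bigr)$, so it suffices to understand the combination $X-\rho(g)X\rho(g)^{-1}$ paired against $\rho(\mT(g))$.

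The crucial step is a polar-type decomposition $g=uk$ with $u=e^\eta$, $\eta\in\mf{p}$, and $k\in K$, available in the Riemannian symmetric-pair setting that underlies the examples of interest. Because $\sigma$ acts by $-1$ on $\mf{p}$ one has $\sigma(u)=u^{-1}$, and hence $\mT(g)=g\sigma(g)^{-1}=u^2$. Meanwhile, the $\mk$-invariance hypothesis on $C_\sigma(r)$ integrates (for connected $K$) to $\rho(k)\,X\,\rho(k)^{-1}=X$, giving $\rho(g)X\rho(g)^{-1}=\rho(u)X\rho(u)^{-1}$. Substituting,
\[
R(g)=tr\!\bigl((X-\rho(u)X\rho(u)^{-1})\,\rho(u)^2\bigr)=tr(X\rho(u)^2)-tr(\rho(u)X\rho(u)),
\]
and the two terms on the right are equal by cyclicity of the trace, so $R(g)=0$.

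The main obstacle is justifying the polar decomposition $g=uk$. It is a standard fact for Riemannian symmetric pairs (such as the real semisimple case $(SL_{n+1}(\bR),SO_{n+1}(\bR))$ treated in Section 7), but in the broader Poisson--Lie context one should either verify that $G=\exp(\mf{p})\cdot K$ holds directly, or invoke an analyticity/density argument to propagate the identity $R(g)=0$ from a Zariski-dense Cartan-decomposable subset to all of $G$. Everything else in the proof is a direct computation enabled by the involution property $\sigma^2=1$ and cyclicity of the trace.
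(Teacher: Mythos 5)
Your argument is correct and follows essentially the same route as the paper: both start from the identity (\ref{cancel}), use the local decomposition $g=\exp(z)k$ with $z\in\mf{p}$, $k\in K$ (so that $\mT(g)=\exp(2z)$ and conjugation by $k$ fixes $C_\sigma(r)$), and cancel the two $C_\sigma(r)$ terms by cyclicity of the trace. The caveat you raise about the decomposition being only local near the identity applies equally to the paper's proof, which invokes the same Helgason-type neighborhood statement.
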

\begin{proof}
As in the previous section, write $\mathfrak{g}=\mk\oplus\mathfrak{p}$ for the decomposition of $\mathfrak{g}$ into the $\pm1$ eigenspaces of $\sigma$.  Recall \cite{helgason} that we may take neighborhoods $V_\mk, V_{\mf{p}}$ in $\mk, \mf{p}$ such that the map $V_\mf{p}\times V_{\mk}\rightarrow G, \ (z,y)\mapsto \exp(z)\exp(y)$ is a diffeomorphism onto an open neighborhood $U$ of the identity in $G$.  For $g=\exp(z)\exp(y)=:pk\in U$, the $K$-invariance of $C_\sigma(r)$ implies
$$
tr_{V\otimes W} \ (g\otimes g)C_\sigma(r)(\sigma(g^{-1})\otimes\sigma(g^{-1}))=tr_{V\otimes W}(p^2\otimes p^2) C_\sigma(r)
$$
Since $g\sigma(g^{-1})=p^2$, the terms involving $C_\sigma(r)$ in (\ref{cancel}) cancel and we obtain the result.  

\end{proof}

\section{Twisting.}
The construction of the previous section also admits a twisted version, which we shall now describe.  Let $(\mathfrak{g},r,\sigma$) be a solution of the classical reflection equation $C_\sigma(r)=0$, and suppose that $\varphi_{\pm}$ are two automorphisms of $\mathfrak{g}$.   As usual, we denote the Lie algebra of fixed points of $\sigma$ by $\mk$, and the corresponding Lie group by $K$.  The automorphisms $\varphi_{\pm}$ allow us to define twisted left and right actions of $K$ on $G$:
\begin{equation}
\label{twistedactions}
k\triangleright g=\varphi_+(k)g, \ \ g\triangleleft k=g\varphi_-(k)
\end{equation}
\begin{prop}
Suppose that the involutions $\sigma_\pm=\varphi_\pm\circ\sigma\circ\varphi_\pm^{-1}$ are again solutions of $C_{\sigma_\pm}(r)=0$.  Then the subalgebra $\bC(K_+\backslash G/K_-)$ of twisted $K$-bi-invariant functions on $G$ is Poisson commutative.  
\end{prop}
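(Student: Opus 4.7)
The plan is to adapt the proof of Theorem \ref{transfer} essentially verbatim, using two different $\sigma$-eigenspace decompositions of $\mathfrak{g}$—one controlling the left slot and one the right. The preliminary step is bookkeeping: a function $f$ on $G$ is invariant under the twisted actions (\ref{twistedactions}), i.e.\ $f(\varphi_+(k)\,g\,\varphi_-(k'))=f(g)$ for all $k,k'\in K$, if and only if $f$ is literally left-invariant under $K_+=\varphi_+(K)$ and right-invariant under $K_-=\varphi_-(K)$. Since $\sigma_\pm=\varphi_\pm\circ\sigma\circ\varphi_\pm^{-1}$, the groups $K_\pm$ are precisely the fixed-point subgroups of the involutions $\sigma_\pm$, so the algebra of twisted $K$-bi-invariant functions coincides with $\bC(K_+\backslash G/K_-)$.

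Next I would compute $\{f_1,f_2\}$ for arbitrary $f_1,f_2\in\bC(K_+\backslash G/K_-)$ directly from the Sklyanin formula (\ref{sklyaninbracket}). Writing $r=\sum_i a_i\otimes b_i$, this reads
$$\{f_1,f_2\}(g)=\sum_i\bigl(a_i^L[f_1]\,b_i^L[f_2]-a_i^R[f_1]\,b_i^R[f_2]\bigr),$$
and I would expand $r$ according to the decomposition (\ref{decomp}) associated to $\sigma_+$ when analyzing the $L$-term, and the analogous decomposition associated to $\sigma_-$ when analyzing the $R$-term. Left $K_+$-invariance of the $f_i$ forces $X^L[f_i]=0$ for every $X\in\mk_+$, so the $L$-sum collapses to the $\mathfrak{p}_+\otimes\mathfrak{p}_+$ component of $r$; right $K_-$-invariance likewise collapses the $R$-sum to the $\mathfrak{p}_-\otimes\mathfrak{p}_-$ component. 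Invoking the finite-order CRE proposition for each of $\sigma_\pm$, the hypotheses $C_{\sigma_+}(r)=0=C_{\sigma_-}(r)$ give $r_{\mathfrak{p}_+\mathfrak{p}_+}=0=r_{\mathfrak{p}_-\mathfrak{p}_-}$, and both surviving pieces vanish.

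There is no real obstacle: the whole content of the twisting is the initial identification of twisted bi-invariants with honest $K_+$-left/$K_-$-right invariants, after which the proof of Theorem \ref{transfer} applies \emph{mutatis mutandis}. I note in passing that if one wanted an explicit generating family one could introduce a twisted reflection monodromy matrix such as $\tilde{\mT}(g)=\varphi_+^{-1}(g)\,\sigma\bigl(\varphi_-^{-1}(g)\bigr)^{-1}$ and recompute its Poisson brackets along the lines of (\ref{cancel}), with the two CRE hypotheses now cancelling the $L$- and $R$-side anomalies separately; but for the abstract commutativity statement the direct bi-invariance argument above suffices.
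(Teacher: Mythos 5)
Your argument is correct and is essentially the paper's own proof, just written out in more detail: the paper likewise identifies twisted bi-invariance with $\varphi_+(X)^Lf=0=\varphi_-(X)^Rf$ for $X\in\mk$, uses $C_{\sigma_\pm}(r)=0$ to conclude $r_{\mf{p}_\pm\mf{p}_\pm}=0$ in the two decompositions (\ref{decomp}), and then reads off commutativity from formula (\ref{sklyaninbracket}). The only difference is expository — you spell out the collapse of the $L$- and $R$-sums that the paper leaves implicit.
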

\begin{proof}
Since $C_{\sigma_\pm}(r)=0$, we have $r_{\mf{p}_{\pm}\mf{p}_\pm}=0$ in the corresponding decompositions  (\ref{decomp}) of $r$.  The $\sigma_\pm$ fixed-subalgebras $\mk_{\pm}$ are related to $\mk$ by $\mk_{\pm}=\varphi_\pm(\mk)$.   But $f$ is a twisted $K$-bi-invariant function we have
$$
\varphi_+(X)^Lf=0=\varphi_-(X)^Rf
$$ 
for all $X\in\mk$, and from formula (\ref{sklyaninbracket}) for the Poisson bracket the result follows.  
\end{proof}
Of particular interest to us is when the the automorphisms $\varphi_\pm$ are of the form $\Ad_{h_\pm}$ for some $h_\pm\in G$.   In this case, we may form the (right) twisted monodromy matrix
\begin{equation}
\label{twistedmonodromy}
\mT(g)=gh_-\sigma(h_-^{-1}g^{-1})
\end{equation}
whose matrix elements in any finite dimensional representation $V$ are invariant under twisted right action of $K$.  Then the twisted transfer matrices
\begin{equation}
\tau^V=\mathrm{tr}_V \left( \mT(g)\sigma(h_+)h^{-1}_+ \right)
\end{equation}
are twisted bi-invariant functions on $G$.  Putting $K_+=\sigma(h_+)h_+^{-1}$ and $K_-=h_-\sigma(h_-)^{-1}$, we may write the twisted transfer matrix more economically as
\begin{equation}
\label{twistedtransfer}
\tau^V=tr_V \left( gK_-\sigma(g^{-1})K_+\right)
\end{equation}

\section{Factorization dynamics.}
We now proceed to the description of the dynamics of the systems constructed in the previous sections. It is well-known that the flows of $\Ad_G$-invariant Hamiltonians on a factorizable Poisson-Lie group can be described by solving a certain factorization problem in $G$. The first statement of this result in the Poisson-Lie context can be found in \cite{semenovref}; for further details see \cite{RSTS}.  In the present case, we will see that the solution of the reflection dynamics generated by $K$-bi-invariant Hamiltonians is governed by a factorization problem of Iwasawa type.  

\bigskip

 For simplicity, we shall work in the untwisted setting.   Let us begin by writing down the equations of motion generated by $K$-bi-invariant Hamiltonians on a coboundary Poisson-Lie group $(G,r)$.  

Given a function $H\in\bC(G)$, its left and right gradients at a point $g\in G$ are functionals $\nabla^{\pm}H(g)\in\mathfrak{g}^*$ defined by
$$
\langle \nabla^+H(g),X\rangle=\frac{d}{dt}\bigg|_{t=0}H(e^{tX}g), \ \ \langle \nabla^-H(g),X\rangle=\frac{d}{dt}\bigg|_{t=0}H(ge^{tX})
$$
In this section, we shall assume that we have fixed a finite dimensional representation $(\rho,V)$ of $G$, and to simplify notation we shall confuse group elements $g\in G$ with their images $\rho(g)\in \End(V)$.  Now, it follows from formula (\ref{sklyaninbracket}) that the matrix $g$ evolves under the Hamiltonian flow of $H$ by 
\begin{equation}
\label{geneq}
\dot{g}(t)=r\left(\nabla^+H\right)g-gr\left(\nabla^-H\right)
\end{equation} 
 where we regard the tensor $r\in\mathfrak{g}\otimes\mathfrak{g}$ as a linear map $\mathfrak{g}^*\rightarrow\mathfrak{g}$ by contraction in the first tensor factor.  

\begin{prop} Suppose that $(G, r,\sigma)$ is a solution of the classical reflection equation with $\sigma$ of finite order, and that $H\in C(K\backslash G/K)$ is a $K$-bi-invariant Hamiltonian.  Then the Hamiltonian flow of $H$ takes place on $K\times K$-orbits in $G$, and the reflection monodromy matrix evolves in time by the Lax equation
\begin{equation}
\label{lax}
\dot{\mT}(t)=[r\left(\nabla^+H\right)(t),\mT(t)]
\end{equation}
\end{prop}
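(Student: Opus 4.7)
The strategy is to combine $K$-bi-invariance of $H$ to constrain the left and right gradients, the CRE to control where $r$ sends those gradients, and then a direct differentiation of $\mT = g\sigma(g)^{-1}$.

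First I would observe that, since $H$ is left- and right-invariant under $K$, every $X \in \mk$ satisfies $X^L[H] = X^R[H] = 0$; by the defining formulas for $\nabla^\pm H(g)$ this is exactly the statement $\nabla^\pm H(g) \in \mk^\perp$ for every $g \in G$. Next I would show that $r$, viewed as a map $\mathfrak{g}^* \to \mathfrak{g}$ by contraction in the first tensor slot, carries $\mk^\perp$ into $\mk$. With respect to the decomposition $\mathfrak{g} = \mk \oplus \mathfrak{p}$ given by $\sigma$, a functional $\xi \in \mk^\perp$ pairs trivially with $\mk$ in the first slot, so $r(\xi)$ receives contributions only from $r_{\mathfrak{p}\mk}$ and $r_{\mathfrak{p}\mathfrak{p}}$. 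The CRE kills the latter by the preceding proposition, leaving $r(\xi) \in \mk$. Substituting $P := r(\nabla^+ H)$ and $Q := r(\nabla^- H)$, both in $\mk$, into the general evolution equation (\ref{geneq}) gives $\dot g = Pg - gQ$, which is tangent to the $K\times K$-orbit through $g$, proving the first assertion.

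For the Lax form, I would differentiate $\mT = g\sigma(g)^{-1}$ via the product rule. The crucial point is that $P, Q \in \mk$ are fixed by $\sigma$, so applying $\sigma$ to the equation of motion yields
\begin{equation*}
\frac{d}{dt}\sigma(g) = \sigma(P)\sigma(g) - \sigma(g)\sigma(Q) = P\sigma(g) - \sigma(g)Q,
\end{equation*}
and hence $\frac{d}{dt}\sigma(g)^{-1} = -\sigma(g)^{-1}P + Q\sigma(g)^{-1}$. Adding the two contributions to $\dot\mT$, the $Q$-terms cancel against each other, leaving $\dot\mT = P\mT - \mT P = [r(\nabla^+ H), \mT]$.

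The single substantive step is the inclusion $r(\mk^\perp) \subset \mk$; this is where the CRE enters decisively, since without $r_{\mathfrak{p}\mathfrak{p}} = 0$ one would find stray terms obstructing both the tangent-space statement and, more seriously, the cancellation in the Lax computation. Everything else is essentially bookkeeping once one is careful about which tensor slot of $r$ is being contracted.
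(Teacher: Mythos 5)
Your proof is correct and follows the same route as the paper: bi-invariance gives $\nabla^\pm H\in\mk^\perp$, the vanishing of $r_{\mathfrak{p}\mathfrak{p}}$ from the CRE gives $r(\nabla^\pm H)\in\mk$, and the Lax form follows from differentiating $\mT=g\sigma(g)^{-1}$ and using $\sigma$-invariance of these $\mk$-valued coefficients. The paper leaves the final differentiation as a ``straightforward calculation,'' which you have simply carried out explicitly and correctly.
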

\begin{proof}
If $H\in\bC(K\backslash G/K)$, then for $X\in\mk$, $\langle\nabla^{\pm}H ,X\rangle=0$, and so from the decomposition (\ref{decomp}) for $r$, it follows that  $r_{\mathfrak{p}\mathfrak{p}}=0$ and $r(\nabla^{\pm}H)\in\mk$.  In view of the equation of motion (\ref{geneq}), this proves the first part of the proposition.  The equation of motion for the reflection monodromy matrix is obtained by straightforward calculation using formula (\ref{sklyaninbracket}).  
\end{proof}

Let us now assume that $(G,\tilde{r})$ is a quasitriangular Poisson-Lie group, with $\tilde{r}$ satisfying the classical Yang-Baxter equation in the form
\begin{align}
\label{cyb}
 [\tilde{r}_{12},\tilde{r}_{13}]+[\tilde{r}_{13},\tilde{r}_{23}]+[\tilde{r}_{12},\tilde{r}_{23}]=0
\end{align}
We will further suppose that there exists an element $r\in\mathfrak{g}\otimes \mathfrak{g}$ such that $\Omega:=\tilde{r}-r\in(\mathfrak{g}\otimes\mathfrak{g})^\mathfrak{g}$, so that $r$ and $\tilde{r}$ define the same Lie bialgebra structure on $\mathfrak{g}$, and such that $(\mathfrak{g},r,\sigma)$ is a solution of the classical reflection equation $C_\sigma(r)=0$ for a finite order automorphism $\sigma$. By the Yang-Baxter condition (\ref{cyb}), the linear map \begin{align}
 \tilde{r}:\mathfrak{g}^*\rightarrow \mathfrak{g}, \ \ \xi\longmapsto \langle \xi\otimes 1, r\rangle
 \end{align}
  is a homomorphism of Lie algebras, so its image $\be=\tilde{r}(\mathfrak{g}^*)$ is a Lie subalgebra in $\mathfrak{g}$. 
  
  \bigskip
  
  In order to reduce the reflection dynamics to a factorization problem in $G$, we shall need to make the assumption that $\mathfrak{g}$ admits an {\em Iwasawa decomposition} $\mathfrak{g}=\be\oplus\mk$.  At the group level, this means that in a neighborhood of the identity, each element of $g$ admits a unique factorization $g=bk^{-1}$ with $b\in B$, $k\in K$.  
\begin{prop} Under the above assumptions on $(\mathfrak{g},r,\tilde{r},\sigma)$, the time evolution of the matrix $g(t)$ under the Hamiltonian flow of $H\in\bC(K\backslash G/K)$ is given for sufficiently small time $t$ by
\begin{equation}
\label{facdynamics}
g(t)=k_+^{-1}(t)g_0k_-(t)
\end{equation}
where the matrices $k_\pm(t)$ are solutions of the following factorization problems in $G$:
$$\exp\left(t\Omega(\nabla^{\pm}H(g_0))\right)=b(t)k_{\pm}^{-1}(t)$$ 
\end{prop}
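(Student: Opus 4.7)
The plan is to verify the claimed formula by direct differentiation, reducing the Hamiltonian flow to a pair of decoupled ODEs on $K$ that are solved by the given factorization. Since $H$ is $K$-bi-invariant, the first part of the preceding proposition ensures that the flow preserves $K\times K$-orbits, so we may write $g(t)=k_+^{-1}(t)g_0k_-(t)$ with $k_\pm(t)\in K$ and $k_\pm(0)=e$. Differentiating this ansatz and comparing with the Hamiltonian equation \eqref{geneq} expresses each $k_\pm^{-1}\dot k_\pm$ in terms of $r(\nabla^\pm H(g(t)))$. A short change of variables in the defining formulas for the left and right gradients, using that $H(k_1gk_2)=H(g)$ for $k_1,k_2\in K$, shows that $\nabla^\pm H(g(t))$ is simply the coadjoint transform of $\xi_\pm:=\nabla^\pm H(g_0)$ by $k_\pm(t)^{-1}$, so the two equations for $k_+$ and $k_-$ become decoupled and closed, depending only on the fixed data $\xi_\pm$.

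Next I would check that the $k_\pm(t)$ produced by the proposed factorization solve these ODEs. Rearranging $\exp(t\Omega(\xi_\pm))=b_\pm(t)k_\pm^{-1}(t)$ as $k_\pm(t)=\exp(-t\Omega(\xi_\pm))b_\pm(t)$ and differentiating gives
\begin{equation*}
k_\pm^{-1}\dot k_\pm=-\Ad_{k_\pm^{-1}}\Omega(\xi_\pm)+b_\pm^{-1}\dot b_\pm.
\end{equation*}
Using the $\mathfrak{g}$-invariance of $\Omega$, the first term on the right can be rewritten as $\Omega(\Ad^*_{k_\pm^{-1}}\xi_\pm)$, and since $\Omega=\tilde r-r$ it expands further to $\tilde r(\Ad^*_{k_\pm^{-1}}\xi_\pm)-r(\Ad^*_{k_\pm^{-1}}\xi_\pm)$.

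The decisive observation, and the one I expect to be the main subtlety, is that this expansion is already the Iwasawa decomposition of $\Ad_{k_\pm^{-1}}\Omega(\xi_\pm)$ with respect to $\mathfrak{g}=\be\oplus\mk$. Indeed, $\tilde r(\Ad^*_{k_\pm^{-1}}\xi_\pm)\in\be$ by the very definition $\be=\tilde r(\mathfrak{g}^*)$, while $r(\Ad^*_{k_\pm^{-1}}\xi_\pm)=r(\nabla^\pm H(g(t)))$ lies in $\mk$ by the $K$-bi-invariance of $H$ together with the CRE vanishing $r_{\mf p\mf p}=0$ in the decomposition \eqref{decomp}, by exactly the mechanism used in the proof of the preceding proposition. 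Since $k_\pm^{-1}\dot k_\pm\in\mk$ and $b_\pm^{-1}\dot b_\pm\in\be$, uniqueness of the Iwasawa decomposition splits the identity into its $\mk$-part, which recovers the required evolution equation for $k_\pm$, and its $\be$-part, which is a consistency equation determining $b_\pm$. The local-in-time clause in the statement reflects exactly the fact that the Iwasawa factorization is guaranteed only in a neighborhood of the identity in $G$, so $b_\pm(t)$ and $k_\pm(t)$ exist only for sufficiently small $t$.
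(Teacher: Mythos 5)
Your argument is correct and takes essentially the same route as the paper's proof: differentiate the Iwasawa factorization $e^{t\Omega(\nabla^\pm H(g_0))}=b_\pm(t)k_\pm^{-1}(t)$, use the $K$-bi-invariance of $H$ together with the $\Ad_G$-invariance of $\Omega$ to identify $\Ad_{k_\pm^{-1}}\Omega(\xi_\pm)$ with $\tilde r\left(\nabla^\pm H(g(t))\right)-r\left(\nabla^\pm H(g(t))\right)$, and extract the $\mk$-component $k_\pm^{-1}\dot k_\pm=r\left(\nabla^\pm H(g(t))\right)$ from uniqueness of the decomposition $\mathfrak{g}=\be\oplus\mk$, with $r(\nabla^\pm H)\in\mk$ coming from bi-invariance and $r_{\mathfrak{p}\mathfrak{p}}=0$. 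The only difference is organizational (you posit the $K\times K$-orbit ansatz and its ODEs first and then verify that the factorization solves them, whereas the paper differentiates the factorization directly and then defines $g(t)$), so there is nothing substantive to add.
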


\begin{proof} For brevity, let us denote $Q_\pm=\Omega(\nabla^\pm H(g_0))$.  By the Iwasawa decomposition, for sufficiently small time we have a unique factorization
$$
e^{tQ_\pm }=b(t)k_\pm^{-1}(t)
$$
where $b(t)\in B, k_\pm(t)\in K$.  Differentiating with respect to time shows that
$$
b^{-1}\dot{b}-k_\pm^{-1}\dot{k_\pm}=k_\pm^{-1}Q_\pm k_\pm
$$
Now set
$
g(t)=k_+^{-1}(t)g_0k_-(t)
$.  
By the bi-invariance of $H$ and the $\Ad_G$-invariance of $\Omega$, we have 
\begin{align*}
k_\pm^{-1}(t)Q_\pm k_\pm(t)&=\Omega\left(\nabla_{g(t)}^\pm H\right)\\
&=\tilde{r}\left(\nabla_{g(t)}^\pm H\right)-r\left(\nabla_{g(t)}^\pm H\right)
\end{align*}
But by the Iwasawa decomposition at the Lie algebra level this implies
$$
k_\pm^{-1}\dot{k_\pm}=r\left(\nabla_{g(t)}^\pm H\right)
$$
and by comparison with the equation of motion (\ref{geneq}) the result follows.  
\end{proof}

\begin{cor} Under the above assumptions on $(\mathfrak{g},r,\tilde{r},\sigma)$, the isospectral evolution of the reflection monodromy matrix under the Hamiltonian flow of $H\in\bC(K\backslash G/K)$ is given explicitly by \begin{equation}
\label{facdynamics}
\mT(t)=k_+^{-1}(t)\mT_0k_+(t)
\end{equation}
\end{cor}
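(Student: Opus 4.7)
The plan is to prove the corollary by direct substitution of the factorization formula $g(t) = k_+^{-1}(t) g_0 k_-(t)$ (from the preceding proposition) into the definition $\mT(g) = g \sigma(g)^{-1}$ of the reflection monodromy matrix.

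First I would compute $\sigma(g(t))^{-1}$. Since $k_\pm(t) \in K = G^\sigma$ are $\sigma$-fixed, applying $\sigma$ to the factorization yields $\sigma(g(t)) = k_+^{-1}(t)\, \sigma(g_0)\, k_-(t)$, so
\begin{equation*}
\sigma(g(t))^{-1} = k_-^{-1}(t)\, \sigma(g_0)^{-1}\, k_+(t).
\end{equation*}
Multiplying with $g(t)$ on the left, the two factors of $k_-$ cancel, producing exactly $\mT(t) = k_+^{-1}(t)\, g_0\, \sigma(g_0)^{-1}\, k_+(t) = k_+^{-1}(t)\, \mT_0\, k_+(t)$. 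This is the desired expression, and since the right-hand side is a conjugation of $\mT_0$ by an element of $G$, the evolution is manifestly isospectral.

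There is no real obstacle here: the only substantive input is the $\sigma$-invariance of $K$, combined with the factorization already established in the proposition. One should note that the $k_-(t)$ factors serve merely to move the base point along a $K$-orbit, which is invisible to the combination $g\sigma(g)^{-1}$; this explains why only $k_+$ appears in the final formula. Finally, one can cross-check consistency with the Lax equation (\ref{lax}) by differentiating $\mT(t) = k_+^{-1}(t)\,\mT_0\, k_+(t)$ and using the relation $k_+^{-1}\dot{k}_+ = r(\nabla^+ H(g(t)))$ derived in the proof of the preceding proposition, which reproduces $\dot{\mT}(t) = [r(\nabla^+ H)(t), \mT(t)]$ as expected.
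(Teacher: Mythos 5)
Your proof is correct and follows the same route the paper intends: the corollary is stated without proof as an immediate consequence of the factorization formula $g(t)=k_+^{-1}(t)g_0k_-(t)$, and your direct substitution into $\mT(g)=g\sigma(g)^{-1}$ using $\sigma(k_\pm)=k_\pm$ is exactly the intended argument. The consistency check against the Lax equation is a nice bonus but not required.
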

In particular, it follows that all spectral invariants of the reflection monodromy matrix are conserved quantities under the flow of the reflection Hamiltonians.

\section{Finite dimensional examples. }
In this section we shall consider some examples of our construction applied to finite dimensional Lie algebras.   We begin with an example that, while not suited for the construction of integrable systems, is nonetheless important from the point of view of duality theory of Poisson-Lie groups, see \cite{semenovref}, \cite{qdp}. 

\bigskip

 Suppose that $(\mathfrak{g},r)$ is a factorizable Lie bialgebra, and as usual let $J=\frac{1}{2}(r+r_{21})$, $\hat{r}=\frac{1}{2}(r-r_{21})$ be the decomposition of $r$ into its symmetric and skew parts.    We write $r_+,r_-:\mathfrak{g}^*\rightarrow \mathfrak{g}$ for the homomorphisms of Lie algebras determined by contraction with $r$ and $-r_{21}$ respectively in the first tensor factor.  Let $\mathfrak{d}$ be the double Lie bialgebra of $\mathfrak{g}$. As a Lie algebra, $\mathfrak{d}\simeq \mathfrak{g}\oplus\mathfrak{g}$, with $\mathfrak{g}$ being embedded as the diagonal subalgebra $\mathfrak{g}_\Delta\subset\mathfrak{g}\oplus\mathfrak{g}$, and $\mathfrak{g}^*$ being embedded via the homomorphism of Lie algebras $r_+\oplus r_-:\mathfrak{g}^*\rightarrow \mathfrak{g}\oplus\mathfrak{g}$.  The bialgebra structure on $\mathfrak{d}$ is determined by the $r$-matrix
$$
r_\mathfrak{d}=\sum_{i}(x_i,x_i)\otimes (r_+(\xi_i),r_-(\xi_i))
$$
where $x_i,\xi_i$ are dual bases in $\mathfrak{g},\mathfrak{g}^*$.  Now consider the Lie algebra involution \begin{align}
\sigma:\mathfrak{d}\rightarrow \mathfrak{d}, \ \ \ \sigma(x,y)=(y,x)
\end{align}
whose fixed point set is precisely the Lie sub-bialgebra $\mathfrak{g}_\Delta$.  
Since $(\sigma\otimes 1)r_\mathfrak{d}=r_\mathfrak{d}$, the triple $(\mathfrak{d},r_\mathfrak{d},\sigma)$ is a solution of the classical reflection equation.  At the group level, the reflection monodromy map (\ref{rmm}) is given by
\begin{align}
\label{projectdual}
\mathcal{T}: (G\times G)/G_\Delta\rightarrow (G\times G), \ \ (g_1,g_2)\longmapsto (g_1g_2^{-1}, g_2g_1^{-1})
\end{align}
Composing with the projection onto the first factor yields a map $(G\times G)/G_\Delta\rightarrow G$, and the quotient Poisson structure on $G$ is identified with the pushforward of that of the dual Poisson-Lie group $G^*\subset G\times G$.  Moreover, if $V$ is a representation of $G$, then tensoring with the trivial representation gives a representation $V\otimes\bC$ of $G\times G$, and evaluating formula (\ref{refpb}) for the Poisson brackets of reflection monodromy matrix elements in this representation yields the well-known formula
\begin{align}
\label{dualbrack}
\{\mathcal{T}_1\otimes\mathcal{T}_2\}&= r_{12}\mT_1\mT_2-\mT_1\mT_2r_{21}-\mT_1r_{12}\mT_2+\mT_2r_{21}\mT_1\\
&=\hat{r}T_{1}T_{2}+T_1T_2\hat{r}-T_1(\hat{r}+J)T_2-T_2(\hat{r}-J)T_1
\end{align}
for the pushforward to $G$ of the Poisson bracket of the dual group $G^*$, as found in \cite{semenovref}.   The $G_\Delta$-bi-invariant reflection Hamiltonians are identified with the conjugation invariant functions on $G$, which are Casimirs with respect to the bracket (\ref{dualbrack}) and thus generate trivial Hamiltonian dynamics on $G$.  
\bigskip

To obtain examples with nontrivial dynamics, we shall now apply our general construction to the case of split real semisimple Lie algebras.  For simplicity, we will focus on the case of type $A_n$, when $\mathfrak{g}=sl_{n+1}(\bR)$.  

\bigskip

Let us choose a triangular decomposition $\mathfrak{g}=\n^-\oplus\h\oplus\n^+$, and a system of Chevalley generators $\{E_i,F_i,H_i\}$. We denote the set of positive roots by $\Delta^+$.    The standard Lie bialgebra structure on $\mathfrak{g}$ is defined by
\begin{align}
\label{cobracket}
\delta(H_i)=0, \ \ \delta(E_i)=E_i\wedge H_i, \ \  \delta(F_i)=F_i\wedge H_i,
\end{align}
or equivalently by the skew $r$-matrix
\begin{align}
\hat{r}=\sum_{\alpha\in\Delta^+}E_\alpha\wedge F_\alpha
\end{align}
The {\em Cartan involution} on $\mathfrak{g}$ is the Lie algebra automorphism $\theta$ defined by 
$$
\theta(H_i)=-H_i, \ \ \theta(E_i)=-F_i, \ \  \theta(F_i)=-E_i
$$
This involution gives rise to a decomposition $\mathfrak{g}=\mathfrak{k}\oplus\mathfrak{p}$ into its $\pm 1$ eigenspaces known as the {\em Cartan decomposition}. The fixed point set $\mathfrak{k}=so_n(\bR)$ is a Lie subalgebra in $\mathfrak{g}$, the anti-fixed point set $\mathfrak{p}$ consisting of traceless symmetric matrices is a $\mathfrak{k}$-module. 
\begin{prop} The triple $(\mathfrak{g}, \hat{r},\theta)$ is a solution of the classical reflection equation.  
\end{prop}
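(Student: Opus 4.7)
The plan is to compute $C_\theta(\hat{r})$ directly, after first pinning down how the Cartan involution acts on the root vectors $E_\alpha, F_\alpha$ attached to every (not just simple) positive root.

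First I would verify, by induction on the height of $\alpha \in \Delta^+$, that $\theta(E_\alpha) = -F_\alpha$ and $\theta(F_\alpha) = -E_\alpha$. The base case is the definition of $\theta$ on Chevalley generators. For the inductive step, if $\alpha = \beta + \alpha_i$ with $\beta$ of lower height, then $E_\alpha$ is a nonzero scalar multiple of $[E_i, E_\beta]$, and applying the automorphism $\theta$ together with the inductive hypothesis gives $\theta(E_\alpha) = [-F_i, -F_\beta] = [F_i, F_\beta]$, which is the corresponding multiple of $-F_\alpha$ under the standard sign conventions; the argument for $F_\alpha$ is symmetric.

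With this in hand, a summand-by-summand computation in $\hat{r} = \sum_{\alpha\in\Delta^+} E_\alpha \wedge F_\alpha$ settles the CRE. On a single summand, $(\theta \otimes \theta)(E_\alpha \wedge F_\alpha) = (-F_\alpha) \wedge (-E_\alpha) = -\, E_\alpha \wedge F_\alpha$, so that $(\theta \otimes \theta)(\hat{r}) = -\hat{r}$. Likewise $(\theta \otimes 1)(E_\alpha \wedge F_\alpha) = E_\alpha \otimes E_\alpha - F_\alpha \otimes F_\alpha$ and $(1 \otimes \theta)(E_\alpha \wedge F_\alpha) = F_\alpha \otimes F_\alpha - E_\alpha \otimes E_\alpha$, which cancel term by term, giving $(\theta \otimes 1 + 1 \otimes \theta)(\hat{r}) = 0$. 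Assembling, $C_\theta(\hat{r}) = -\hat{r} + \hat{r} - 0 = 0$.

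The only nonroutine ingredient is the inductive verification of the $\theta$-action on root vectors at arbitrary height; once that is secured, the reflection equation reduces, in the language of the proposition at the end of Section 3, to the clean observation that $\hat{r}$ lies entirely in $\mathfrak{k} \wedge \mathfrak{p}$ and hence has vanishing component in $\mathfrak{p} \otimes \mathfrak{p}$. Indeed, writing $K_\alpha = \tfrac12(E_\alpha - F_\alpha) \in \mathfrak{k}$ and $P_\alpha = \tfrac12(E_\alpha + F_\alpha) \in \mathfrak{p}$, one checks directly that $E_\alpha \wedge F_\alpha$ is a scalar multiple of $K_\alpha \wedge P_\alpha$, which makes the vanishing of $\hat{r}_{\mathfrak{p}\mathfrak{p}}$ manifest.
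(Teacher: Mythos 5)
Your proof is correct; the paper states this proposition without an explicit proof, and your direct computation of $C_\theta(\hat r)$ --- equivalently, the observation that each summand $E_\alpha\wedge F_\alpha$ lies in $\mathfrak{k}\wedge\mathfrak{p}$, so $\hat r_{\mathfrak{p}\mathfrak{p}}=0$ and the criterion of the finite-order Proposition in Section 3 applies --- is exactly the intended verification. One small simplification: since the paper works with $\mathfrak{g}=sl_{n+1}(\bR)$, where $\theta(X)=-X^{T}$ and the root vectors are the elementary matrices $E_{jk}$ with $F_\alpha=E_\alpha^{T}$, the identity $\theta(E_\alpha)=-F_\alpha$ holds by inspection for all positive roots, and the induction on height (with its attendant sign bookkeeping in the Chevalley basis) can be skipped entirely.
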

From this (or by inspection of the formulas for the cobracket), it follows that $\mathfrak{k}$ is a coideal Lie subalgebra in $\mathfrak{g}$.    Note also that $\mathfrak{k}$ does {\em not} satisfy $\delta(\mathfrak{k})\subset\mathfrak{k}\otimes\mathfrak{k}$, so $K$ is not a Poisson-Lie subgroup in $G$.  

\bigskip

 Denote by $A,B$ the analytic subgroups of $G$ with Lie algebras $\h, \mathfrak{b}=\h\oplus\mathfrak{n}$ respectively.  By virtue (see \cite{helgason}) of the Iwasawa decomposition $G=BK$, we may identify the homogeneous space $G/K$ with $B$.  Observe that $B\subset G$ is a Poisson submanifold.  
\begin{prop}
The Poisson structure on $G/K$ coincides with the Poisson structure on $B$ coming from its inclusion into $G$.  
\end{prop}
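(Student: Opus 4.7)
The plan is to realize the Iwasawa identification $B \cong G/K$ as a composition of two Poisson maps, which will immediately identify the two Poisson structures on $B$ under consideration. Write $\iota : B \hookrightarrow G$ for the inclusion and $\pi : G \to G/K$ for the canonical projection. Since $G = BK$ globally with $B \cap K = \{e\}$ for our choice of $B, K$, the composition $\phi := \pi \circ \iota : B \to G/K$, $b \mapsto bK$, is a diffeomorphism of smooth manifolds. It therefore suffices to show $\phi$ is Poisson; being a diffeomorphism, it will then be a Poisson isomorphism, and the Poisson structure transported from $G/K$ to $B$ via $\phi$ will coincide with the submanifold Poisson structure on $B$, which is exactly the assertion of the proposition.

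The composition $\phi = \pi \circ \iota$ is Poisson for two independent reasons. First, $\pi$ is Poisson by Proposition \ref{homogeneous}; this applies because the previous proposition established that $(\mathfrak{g}, \hat{r}, \theta)$ is a solution of the CRE, so in particular $\mk$ is a coideal Lie subalgebra of $\mathfrak{g}$. Second, $\iota$ is Poisson because $B$ is a Poisson submanifold of $G$, as observed in the paragraph preceding the proposition statement. Composing two Poisson maps is Poisson, so we are done.

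The one substantive input is the claim that $B$ is a Poisson submanifold of $G$. I would verify this by showing that $\mathfrak{b}$ is a Lie sub-bialgebra, i.e., $\delta(\mathfrak{b}) \subset \mathfrak{b} \wedge \mathfrak{b}$. On the generators this is immediate from formula \eqref{cobracket}: $\delta(H_i) = 0$ and $\delta(E_i) = E_i \wedge H_i$ both lie in $\mathfrak{b} \wedge \mathfrak{b}$. The 1-cocycle condition together with $[\mathfrak{b}, \mathfrak{b}] \subset \mathfrak{b}$ then propagates the inclusion to all of $\mathfrak{b}$, since $\mathfrak{b}$ is generated by $\{H_i, E_i\}$. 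This makes $B$ a Poisson-Lie subgroup of $G$, and in particular a Poisson submanifold, so $\iota$ is indeed Poisson. I do not anticipate any serious obstacle: the argument is entirely structural, chaining together two Poisson maps whose composition happens to be a global diffeomorphism by Iwasawa.
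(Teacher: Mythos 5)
Your argument is correct and essentially the same as the paper's: both reduce the claim to the two facts that $\pi$ is Poisson (since $\mathfrak{k}$ is a coideal Lie subalgebra) and that $B$ is a Poisson submanifold of $G$, combined with the Iwasawa identity $(\iota\circ\pi)|_B=\mathrm{id}_B$ --- the paper phrases this as a direct pushforward computation of $\iota_*^{\otimes2}\eta_{G/K}$, while you equivalently note that $\pi\circ\iota$ is a bijective Poisson map with smooth inverse and hence a Poisson isomorphism. Your verification that $\delta(\mathfrak{b})\subset\mathfrak{b}\wedge\mathfrak{b}$, making $B$ a Poisson--Lie subgroup, correctly fills in a fact the paper merely observes.
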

\begin{proof}
Let $\pi:G\rightarrow G/K\simeq B$ be the projection with respect to Iwasawa decomposition, and $\iota:G/K\simeq B\rightarrow G$ be inclusion.  We must show that $\iota$ is Poisson, where $B\simeq G/K$ is equipped with the {\em quotient} Poisson structure.  For $b\in B$, we calculate $\iota_*^{\otimes2}\left(\eta_{G/K}(bK)\right)$.  By definition of the quotient Poisson structure and the chain rule, we have
\begin{align*}
\iota_*^{\otimes2}\left(\eta_{G/K}(bK)\right)&=\iota_*^{\otimes2}\pi_*^{\otimes2}\left(\eta_{G}(b)\right)\\
&=(\iota\circ\pi)_*^{\otimes2}\left(\eta_{G}(b)\right)
\end{align*}
Since $B\subset G$ is Poisson, $\eta_{G}(b)\subset TB\otimes TB$.  Moreover, we have $(\iota\circ\pi)|_{B}=\text{id}_B$.  Hence $$(\iota\circ\pi)_*^{\otimes2}\left(\eta_{G}(b)\right)
=\eta_{G}(b)$$ which shows that $\iota$ is Poisson.  
\end{proof} 
{\bf Remark. } Recall \cite{helgason} that we also have the {\em global Cartan decomposition} $G=PK$: any element of $g$ may be uniquely factored $g=pk$, where $k\in SO_n$ and $p$ is a symmetric positive definite matrix.  This shows that we may also identify $G/K$ with the space $P$ of symmetric positive definite matrices.  The element $p$ in the factorization of $g$ is essentially the reflection monodromy matrix, since it may be explicitly computed as $p^2=gg^T=g\theta(g^{-1})$.  

\bigskip
%
By the so-called $KAK$ decomposition (see again \cite{helgason}) , the reflection Hamiltonians may be regarded as functions on $K\backslash G/K\simeq A$, where $A$ is the $n$-dimensional Cartan subgroup of unit determinant diagonal matrices. 
Thus in order to describe integrable systems on $G/K$, we must identify symplectic leaves of dimension $2n$.   Recall \cite{kolyafour}, \cite{harold} that the double Bruhat cells $G^{u,v}$ are $A$-invariant Poisson subvarieties of $G$.  Consider the $2n$-dimensional double Bruhat cell $G^{c,1}\subset B$ where $c$ is the Coxeter element $c=s_n\cdots s_1$ in the symmetric group  $S_{n+1}$. 
\begin{prop}
The double Bruhat cell $G^{c,1}$ is mapped isomorphically under the quotient projection to a symplectic manifold $M_c\subset G/K$ of dimension $2n$, to and the restriction of $C(K\backslash G/K)$ to $M_c$ defines an integrable system. 
\end{prop}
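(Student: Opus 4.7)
The argument proceeds in three stages: establishing the embedding, verifying symplecticity, and producing a complete family of commuting Hamiltonians.

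\textbf{Embedding and dimension.} Since $G^{c,1}\subset B$ and the Iwasawa decomposition $G=BK$ furnishes a diffeomorphism $\pi|_B\colon B \xrightarrow{\sim} G/K$, the restriction $\pi|_{G^{c,1}}$ is automatically an embedding onto $M_c=\pi(G^{c,1})$. The standard formula $\dim G^{u,v}=\ell(u)+\ell(v)+\mathrm{rank}(\mathfrak{g})$ yields $\dim M_c = n+0+n=2n$.

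\textbf{Symplecticity.} By the preceding proposition, the Poisson structure on $G/K$ transported via $G/K\simeq B$ agrees with the one coming from the inclusion $B\subset G$. Hence the Poisson tensor on $M_c$ is identified with the restriction of the Sklyanin bracket to $G^{c,1}$. I would then invoke the factorization coordinates on $G^{c,1}$ attached to the reduced word $\underline{c}=(s_n,\ldots,s_1)$, as developed in \cite{kolyafour} and \cite{harold}: these present $G^{c,1}$ as an open subset of $(\bR_{>0})^{2n}$ in which the pullback of the Sklyanin bracket is log-canonical, and the resulting $2n\times 2n$ exchange matrix is nondegenerate in the Coxeter case. Consequently the restricted bracket has everywhere rank $2n$, so $M_c$ is symplectic.

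\textbf{Integrability.} Pairwise Poisson commutativity of the reflection transfer matrices $\{\tau^V\}$ is given by Theorem \ref{transfer}, so $C(K\backslash G/K)|_{M_c}$ is commutative. For functional independence, I would use the $KAK$ decomposition, which induces an isomorphism $A/W_0 \xrightarrow{\sim} K\backslash G/K$ of affine varieties of dimension $n$, where $W_0$ is the restricted Weyl group acting on the Cartan torus $A$. Hence $C(K\backslash G/K)$ is generated by $n$ algebraically independent functions, which may be taken to be the reflection transfer matrices $\tau^{V_1},\ldots,\tau^{V_n}$ attached to the fundamental representations. It then suffices to exhibit one point of $G^{c,1}$ at which the composition $M_c\to K\backslash G/K\simeq A/W_0$, sending $gK\mapsto g\theta(g^{-1})$ and extracting its $A$-component modulo $W_0$, has surjective differential. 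In the factorization coordinates of the previous step this is a finite-dimensional Jacobian computation that can be carried out at a conveniently chosen generic point.

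\textbf{Where the work is.} The main technical input is the log-canonical presentation of the Sklyanin bracket on $G^{c,1}$ and the nondegeneracy of its exchange matrix; once in hand, it simultaneously yields symplecticity of $M_c$ and supplies explicit coordinates in which both the Poisson structure and the KAK projection are amenable to direct computation, so that functional independence of the $\tau^{V_i}$ on $M_c$ reduces to evaluating a single Jacobian at a specific point.
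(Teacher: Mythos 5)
Your proposal is correct in outline and shares the paper's skeleton (embed $G^{c,1}$ into $G/K$ via Iwasawa, use the preceding proposition to transport the bracket, get commutativity from Theorem \ref{transfer}, then check nondegeneracy and independence), but it discharges the two decisive points by appeal to general double-Bruhat-cell theory where the paper instead computes everything explicitly. For symplecticity, the paper coordinatizes $G^{c,1}$ as the bidiagonal matrices with entries $a_k=X_{kk}$, $b_k=X_{k,k+1}$, records the log-canonical brackets $\{a_k,b_k\}=a_kb_k$, $\{a_{k+1},b_k\}=-a_{k+1}b_k$, and exhibits global Darboux coordinates $a_k=e^{q_{k-1}-q_k}$, $b_k=e^{p_k}$ outright, which makes nondegeneracy manifest with no appeal to the rank of an exchange matrix; your route via the nondegeneracy of the $2n\times 2n$ exchange matrix for the Coxeter word is also valid (it comes down to $\det(1-c)\neq 0$ on the reflection representation, i.e.\ the Kogan--Zelevinsky leaf count), but you leave it as a citation rather than a verification. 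For independence, the paper writes $\mT=g\theta(g^{-1})=gg^T$ as an explicit symmetric tridiagonal (Jacobi) matrix in the $(a_k,b_k)$ and identifies the restricted reflection Hamiltonians with the open Coxeter--Toda Hamiltonians of \cite{GSV}, whose functional independence is classical; your KAK count showing that $C(K\backslash G/K)$ has exactly $n$ independent generators, followed by a Jacobian check at one point of $M_c$, is a legitimate alternative, but that Jacobian is precisely where the content of integrability lives, and you do not actually evaluate it. If you carry out your plan, note that the tridiagonal form of $\mT$ is the natural place to do so: the coefficients of its characteristic polynomial are independent wherever the off-diagonal entries $b_ka_{k+1}$ are nonzero, which holds on all of $G^{c,1}$. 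So: same strategy, with the paper trading your general structural inputs for explicit formulas that simultaneously settle both verifications and identify the resulting system.
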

Indeed,  $G^{c,1}$ consists of all upper triangular matrices $X$ with positive diagonal entries and with all entries of distance $>1$ from the diagonal equal to zero. If we write $a_k=X_{kk},k=1,\ldots, n+1, \ b_k=X_{k,k+1}$, the non-zero Poisson brackets of coordinates are given by
$$
\{a_{k},b_{k}\}=a_{k}b_{k}, \ \ \{a_{k+1},b_{k}\}=-a_{k+1}b_{k}
$$
The coordinates $(a_k,b_k)$ can be expressed in terms of canonically conjugate coordinates $\{p_k,q_k\}=1,\  k=1,\ldots n$ by
$$
a_{k}=e^{q_{k-1}-q_{k}}, \ \ b_{k}=e^{p_k}, \ \ k=1,\ldots, n
$$
where we understand $q_{0}=q_{n+1}=0$.  The reflection monodromy matrix takes the symmetric tridiagonal form
\begin{align}
\label{linref}
\mT=\left[\begin{array}{cccccc}a_1^2+b_{1}^2&b_{1}a_2 &0&0&0&0\\
b_{1}a_{2} &a_{2}^2+b_{2}^2&b_{2}a_{3}&0&0&0\\
0&b_{2}a_{3}&a_{3}^2+b_{3}^2&b_{3}a_{4}&0&0\\
&&\ddots&&&\\
&&&\ddots&&\\
0&0&0&b_{n-1}a_n&a_{n}^2+b_n^2&b_na_{n+1}\\
0&0&0&0& b_n a_{n+1} &a_{n+1}^2
\end{array}\right]
\end{align}
Its trace is the quadratic local Hamiltonian 
$$
tr \ \mT= \sum_{k=1}^{n+1}e^{2(q_{k-1}-q_{k})}+\sum_{k=1}^ne^{2p_k}
$$


This system essentially coincides with the open Coxeter-Toda system with phase space $G^{c,c^{-1}}/A$, as described in \cite{GSV}.  Explicitly, $G^{c,c^{-1}}/A$ consists of the unit determinant tridiagonal matrices, modulo conjugation by diagonal matrices, where we equip $SL_{n+1}(\bR)$ with the Poisson structure defined by the scaled $r$-matrix $2r$.     Then under this normalization, the map $M_c\rightarrow G^{c,c^{-1}}/A, \ \ \mT\mapsto [\mT]\in SL_{n+1}/A$ is Poisson, and carries open Coxeter-Toda Hamiltonians to reflection Hamiltonians.  Hence, our construction gives a ``symmetric" Lax representation of the open Coxeter-Toda system.  This result is a non-linear analog of the fact that  the phase space of the non-relativistic open Toda chain with its linear Poisson structure may be realized in two ways: either as lower Hessenberg matrices, or as symmetric tridiagonal matrices, see \cite{quantumclassical} and \cite{gekhtmanshapiro}. This linear Poisson structure on the symmetric tridiadonal matrices can be obtained by a making an appropriate linearization of (\ref{linref}) in the neighborhood of the identity matrix, see \cite{kolyafour} Section 7.2.  
\bigskip

One may also restrict the reflection Hamiltonians to the $2n$-dimensional symplectic leaves of the double Bruhat cells $G^{c,c}$, although we have been unable to explicitly identify the integrable systems obtained in this fashion with those in the existing literature.  

\bigskip

{\bf Remark. } Although the restriction of the reflection Hamiltonians to symplectic leaves of dimension greater than $2n$ cannot yield an integrable system, the explicit solution of the equations of motion in terms of the factorization problem in $G$ given in Section 6 remains valid on such leaves.  This leads us to suspect that such systems may be {\em degenerately integrable}, as their $\Ad_G$-invariant counterparts were shown to be in \cite{degen}.  We leave the detailed investigation of this subject for a future work.

\section{Classical XXZ spin chain with reflecting boundaries. }
We will now apply our general scheme to the case of the formal loop algebra $Lgl_2=gl_2\otimes\bC[z^{\pm1}]$.  In doing so, we shall recover the semiclassical limit of Sklyanin's XXZ model with reflecting boundary conditions \cite{sklyaninref}.   Let us choose the basis
$$
E=\left[\begin{array}{cc}0&1\\0&0 \end{array}\right], \ \  H=\left[\begin{array}{cc}1&0\\0&-1 \end{array}\right], \ \ F=\left[\begin{array}{cc}0&0\\1&0 \end{array}\right], \ \ I=\left[\begin{array}{cc}1&0\\0&1 \end{array}\right]
$$
for the Lie algebra $\mathfrak{g}=gl_2$.  
The infinite dimensional Lie algebra $L\mathfrak{g}$ has a basis  $\{x[n]=x\otimes z^n \  \big | x\in \{E,H,F,I\}, n\in\Z\}$.  It admits several pseudo-triangular Lie bialgebra structures \cite{CP}, the one of interest to us being determined by the trigonometric $r$-matrix

\begin{align}
\label{trigrmatrix}
r(z,w)=\frac{z^2+w^2}{z^2-w^2}\left(\frac{H\otimes H}{2}\right) +\frac{2zw}{z^2-w^{2}}\left(E\otimes F+F\otimes E\right)
\end{align}
Here we regard $L\mathfrak{g}\otimes L\mathfrak{g}\simeq(\mathfrak{g}\otimes\mathfrak{g})[z^{\pm1},w^{\pm1}]$ as being embedded in the larger space $(\mathfrak{g}\otimes\mathfrak{g})(z,w)$ of $\mathfrak{g}\otimes\mathfrak{g}$-valued rational functions of $z$ and $w$.   The cobracket is given by the formula
$$
\delta(x)(z,w)=\left(\ad_{x(z)}\otimes1+1\otimes\ad_{x(w)}\right)r(z,w)
$$
Setting $r_{12}(z/w)=r(z,w)$, we obtain a solution of the classical Yang-Baxter equation in $(\mathfrak{g}\otimes\mathfrak{g})(z,w)$ with multiplicative spectral parameter:
\begin{equation*}
[r_{12}(z),r_{13}(zw)]+[r_{12}(z),r_{23}(w)]+[r_{13}(zw),r_{23}(w)]=0
\end{equation*}
Observe 
that $r(z)$ satisfies the `unitarity' condition
$$
r_{12}(z^{-1})=-r_{12}(z)
$$
Now let $\rho:gl_2\rightarrow\End(\bC^2)$ be the vector representation of $gl_2$, and consider evaluation representations $\rho_z=\bC^2\otimes\bC[z^{\pm1}],\rho_w=\bC^2\otimes\bC[w^{\pm1}]$.  Again, we embed $\rho_z\otimes\rho_w$ inside $(\bC^2\otimes\bC^2)(z,w)$.   Then the image of $r(z,w)$ is
\begin{align*}
r(z,w)&=\frac{1}{2(z^2-w^2)}\left[\begin{array}{cccc}z^2+w^2&0&0&0\\
						0&-(z^2+w^2)&4zw&0\\
						0&4zw&-(z^2+w^2)&0\\
						0&0&0&z^2+w^2
 \end{array}\right]\end{align*}

The classical monodromy matrix $T(z)$ is defined as the matrix elements of $LGL_2$ in the evaluation representation $\rho_z$.  Poisson brackets of its elements are given by
\begin{align}
\{T_1(z),T_2(w)\}=[r_{12}(z,w),T_1(z)T_2(w)]
\end{align}
The loop algebra $L\mathfrak{g}$ has an involution $\theta$ defined by $(\theta x)(z)=x(z^{-1})$ for $x(z)\in L\mathfrak{g}$. Conjugation by any element of the loop group $LGL_2$ also defines an automorphism of $L\mathfrak{g}$.  In order to reproduce the diagonal solution of the reflection equation studied by Sklyanin in \cite{sklyaninref}, we introduce a one-parameter family of loop group elements $h(z;\xi)$ defined by
\begin{equation}
h(z;\xi)=\left(\begin{array}{cc}1 &0\\0&\xi z^{-1}-z\xi^{-1} \end{array}\right)
\end{equation}
We can then consider the composite automorphism $\sigma_\xi=\Ad_{h(z;\xi)}\circ\theta\circ\Ad_{h^{-1}(z;\xi)}$. Note that setting $\xi=i$ recovers $\sigma_{i}=\theta$.  
\begin{prop} 
\label{cantwist} We have
\begin{equation}
\label{abstractref}
(\sigma_\xi\otimes\sigma_\xi)r(z,w)+r(z,w)-(\sigma_\xi\otimes1+1\otimes\sigma_\xi)r(z,w)=0
\end{equation}
so that $\sigma_\xi$ defines a one-parameter family of solutions of the classical reflection equation.  Moreover, the diagonal solution of the reflection equation considered by Sklyanin in \cite{sklyaninref} is obtained as
\begin{equation*}
\mathcal{K}(z;\xi)=h(z;\xi)h^{-1}(z^{-1};\xi)=\left(\begin{array}{cc}1 &0\\0&\frac{\xi z^{-1}-\xi^{-1}z}{\xi z-z^{-1}\xi^{-1}} \end{array}\right)
\end{equation*}
and in terms of the matrix $\mathcal{K}(z;\xi)$, equation (\ref{abstractref}) becomes\begin{align*}
&r_{12}(z/w)\mathcal{K}_1(z)\mathcal{K}_2(w)+\mathcal{K}_1(z)r_{12}(zw)\mathcal{K}_2(w)\\
&=\mathcal{K}_2(w)r_{12}(zw)\mathcal{K}_1(z)+\mathcal{K}_2(w)\mathcal{K}_1(z)r_{12}(z/w)
\end{align*}
where we for brevity we have suppressed in our notation the $\xi$-dependence of the matrix $\mathcal{K}(z)$.  
\end{prop}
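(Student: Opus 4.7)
The plan is to address the three assertions in order: first derive the explicit form of $\mathcal{K}(z;\xi)$, next translate the abstract reflection equation $C_{\sigma_\xi}(r)=0$ into Sklyanin's boundary form, and finally verify the latter by a direct calculation.

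For the explicit form of $\mathcal{K}$, since $h(z;\xi)=\mathrm{diag}(1,\xi z^{-1}-\xi^{-1}z)$ is diagonal, the product $h(z;\xi)h(z^{-1};\xi)^{-1}$ is also diagonal, with $(1,1)$ entry $1$ and $(2,2)$ entry $(\xi z^{-1}-\xi^{-1}z)/(\xi z-\xi^{-1}z^{-1})$, yielding the stated formula; specializing $\xi=i$ gives $\mathcal{K}(z;i)=\mathrm{Id}$, consistent with $\sigma_i=\theta$.

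Next I unpack the definition of $\sigma_\xi$. A short calculation using $(\theta X)(z)=X(z^{-1})$ shows that for $X(z)\in L\mathfrak{g}$,
$$
\sigma_\xi(X)(z)=\mathcal{K}(z;\xi)\,X(z^{-1})\,\mathcal{K}(z;\xi)^{-1}.
$$
Applying this formula on each tensor slot of $r(z,w)=r_{12}(z/w)$ and invoking the unitarity relation $r_{12}(u^{-1})=-r_{12}(u)$, I can rewrite
\begin{align*}
(\sigma_\xi\otimes\sigma_\xi)r(z,w)&=-\mathcal{K}_1(z)\mathcal{K}_2(w)\,r_{12}(z/w)\,\mathcal{K}_1(z)^{-1}\mathcal{K}_2(w)^{-1},\\
(\sigma_\xi\otimes 1)r(z,w)&=-\mathcal{K}_1(z)\,r_{12}(zw)\,\mathcal{K}_1(z)^{-1},\\
(1\otimes\sigma_\xi)r(z,w)&=\mathcal{K}_2(w)\,r_{12}(zw)\,\mathcal{K}_2(w)^{-1},
\end{align*}
the asymmetry between the last two arising because $z\mapsto z^{-1}$ applied in the first slot sends the ratio $z/w$ to $(zw)^{-1}$, whereas applied in the second slot it sends $z/w$ to $zw$. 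Substituting into $C_{\sigma_\xi}(r)=0$ and multiplying on the right by $\mathcal{K}_1(z)\mathcal{K}_2(w)$ (using that $\mathcal{K}_1$ and $\mathcal{K}_2$ commute, acting on different tensor factors) produces exactly the Sklyanin boundary equation written in the proposition.

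It then remains to verify that equation for our diagonal $\mathcal{K}$. I decompose $r_{12}(u)=\tilde\alpha(u)D+\tilde\beta(u)O$ with $D=\mathrm{diag}(1,-1,-1,1)$ and $O=E_{23}+E_{32}$, where $\tilde\alpha(u)=(u^2+1)/2(u^2-1)$ and $\tilde\beta(u)=2u/(u^2-1)$. Diagonality of $\mathcal{K}$ forces $[D,\mathcal{K}_i]=0$, so all $\tilde\alpha$-contributions cancel identically. Writing $\mathcal{K}(z;\xi)=\mathrm{diag}(1,k(z))$ and computing directly, both $O\mathcal{K}_1\mathcal{K}_2-\mathcal{K}_2\mathcal{K}_1 O$ and $\mathcal{K}_1 O\mathcal{K}_2-\mathcal{K}_2 O\mathcal{K}_1$ are rank-one multiples of $E_{23}-E_{32}$, and the Sklyanin equation collapses to the single scalar identity
$$
\tilde\beta(z/w)\bigl(k(z)-k(w)\bigr)+\tilde\beta(zw)\bigl(1-k(z)k(w)\bigr)=0.
$$
After the substitution $q=\xi^2$, a brief calculation gives $k(z)=(q-z^2)/(qz^2-1)$; clearing denominators then expresses both $k(z)-k(w)$ and $k(z)k(w)-1$ as $(1-q^2)$ times a simple rational function over the common denominator $(qz^2-1)(qw^2-1)$, and the required identity reduces to $(z^2-w^2)\tilde\beta(z/w)^{-1}=(z^2w^2-1)\tilde\beta(zw)^{-1}$, which is immediate.

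The principal obstacle is bookkeeping in the second step — keeping track of the three signs produced by unitarity and the four conjugations by $\mathcal{K}_i$ — after which both the reduction to Sklyanin's equation and the final scalar identity are routine rational-function manipulations.
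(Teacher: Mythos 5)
The paper states Proposition \ref{cantwist} without any proof, so there is no argument of the author's to compare against; your computation is correct and follows the route the statement itself evidently intends (the conjugation formula $\sigma_\xi(X)(z)=\mathcal{K}(z)X(z^{-1})\mathcal{K}(z)^{-1}$, unitarity of $r$ with the sign asymmetry between the two tensor slots, right-multiplication by $\mathcal{K}_1\mathcal{K}_2$, and reduction to a single scalar identity since $\mathcal{K}$ is diagonal). The one flaw is a typo in your final display: the reduced identity should be $(z^2-w^2)\,\tilde{\beta}(z/w)=(z^2w^2-1)\,\tilde{\beta}(zw)$, both sides equalling $2zw$; the version you wrote with $\tilde{\beta}^{-1}$ is false as stated, though the surrounding computation of $k(z)-k(w)$ and $k(z)k(w)-1$ makes clear this is only a slip.
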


{\bf Remark. } Formula (\ref{abstractref}) can be recognized as the semiclassical limit of Sklyanin's quantum reflection equation \cite{sklyaninref}
\begin{equation*}
R_{12}(z_1/z_2)\mathcal{K}_1(z_1)R_{12}(z_1z_2)\mathcal{K}_2(z_2)=\mathcal{K}_2(z_2)R_{12}(z_1z_2)\mathcal{K}_1(z_1)R_{12}(z_1/z_2)
\end{equation*}
where the quantum $R$-matrix
\begin{align*}
R(z)=\left[\begin{array}{cccc}1&0&0&0\\
						0&b(z)&c(z)&0\\
						0&c(z)&b(z)&0\\
						0&0&0&1
 \end{array}\right] \ \ \text{for} \ \ b(z)=\frac{z-z^{-1}}{qz-q^{-1}z^{-1}}, \ \ c(z)=\frac{q-q^{-1}}{qz-q^{-1}z^{-1}}
\end{align*}
is related to $r(z)$ by $$f(z)R(z)=1+hr(z)+O(h^2), \ q=e^{h}, \ h\rightarrow 0$$ where $$f(z)=\frac{q^{1/2}z-q^{-1/2}z^{-1}}{z-z^{-1}}$$

\bigskip
By virtue of Proposition \ref{cantwist}, we may perform the twisting outlined in section 5.   If $\xi_+,\xi_-$ are complex numbers, we shall twist on the left by $\Ad_{h(z;\xi^{-1}_+)}$, and on the right by $\Ad_{h(z;\xi_-)}$. The corresponding twisted reflection monodromy matrix is $$\mT(z)=T(z)\mathcal{K}_-(z)T^{-1}(z^{-1})$$  A simple calculation along the lines of the proof of formula (\ref{refpb}) shows that
\begin{align*}
\{\mT_1(z)\otimes\mT_2(w)\}=[r_{12}&(z/w),\mT_1(z)\mT_2(w)]\\
& + \mT_1(z)r_{12}(zw)\mT_2(w)-\mT_2(w)r_{12}(zw)\mT_1(z)
\end{align*}
which coincides with the formula for the Poisson bracket of reflection monodromy matrix elements given in Sklyanin's original paper \cite{sklyaninref}.    

\bigskip

In order to describe particular finite dimensional systems, we must identify symplectic leaves in $LGL_2$. The leaves we will consider can be described in terms of $\left(SL_2\right)^*$, the Poisson-Lie group dual to $SL_2$ with its standard Poisson structure.

Let $B_\pm$ be the standard pair of opposite Borel subgroups in $SL_2$, and let $\pi_\pm:B_\pm\rightarrow \mathcal{H}$ be the natural projections to the Cartan subgroup $\mathcal{H}\subset SL_2$.  Recall that the Poisson-Lie group $SL_2^*$ can be realized as the subset in $B_+\times B_-$ consisting of pairs $(x,y)$ with $\pi_+(x)\pi_-(y)=1$. As such, we may coordinatize $SL_2^*$ as the set of pairs of matrices
\begin{align}
SL_2^*=\left\{\left(\left[\begin{array}{cc}k&e\\0&k^{-1} \end{array}\right],\left[\begin{array}{cc}k^{-1}&0\\f&k \end{array}\right]\right) \ \Bigg{|} \  k\in \bC^*, e,f\in \bC\right\}
\end{align} 
The Poisson brackets of the coordinate functions $e,f,k$ are given by
\begin{align}
\label{genpb}
\nonumber &\{k,e\}= ke\\ & \{k,f\}=- kf \\ & \{e,f\}=2(k^{2}-k^{-2})\nonumber
\end{align}
The function $\omega=k^2+k^{-2}+ef$ is a Casimir element of the Poisson algebra $\bC[SL_2^*]$, and it is thus constant on symplectic leaves. We shall parameterize its value by $\omega=t^2+t^{-2}$.   The generic level set $\omega_t$ is a two-dimensional symplectic leaf $\Sigma_t$ in $SL_2^*$.  

\bigskip

The relation of the Poisson manifold $SL_2^*$ to the loop group $LGL_2$ arises from the observation that the Poisson brackets (\ref{genpb}) can be recast in matrix form by introducing 
$$
L(z)=\left(\begin{array}{cc} zk-z^{-1}k^{-1} & e\\f&zk^{-1}-z^{-1}k  \end{array}\right)
$$

\begin{prop}
The matrix $L(z)$ satisfies \begin{align}\{L_1(z)\otimes L_2(w)\}=[r_{12}(z/w),L_1(z)L_2(w)]\end{align} Hence the mapping
\begin{align}
\varphi: & \ SL_2^*\longrightarrow LGL_2,\\
\nonumber 
&\left(\left[\begin{array}{cc}k&e\\0&k^{-1} \end{array}\right],\left[\begin{array}{cc}k^{-1}&0\\f&k \end{array}\right]\right)\longmapsto L(z)
\end{align}
defines a Poisson embedding of $SL_2^*$ into the loop group $LGL_2$.  
\end{prop}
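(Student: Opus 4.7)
The plan is to verify directly the quadratic bracket identity $\{L_1(z) \otimes L_2(w)\} = [r_{12}(z/w), L_1(z)L_2(w)]$ by matching entries on both sides, and then to deduce the embedding claim as an essentially immediate corollary. Once the matrix identity is in hand, the fact that $\varphi$ is Poisson follows because the Poisson structure on $LGL_2$ is characterized by exactly this relation applied to the universal evaluation monodromy matrix, while injectivity is clear: the coefficient of $z^1$ in $L(z)_{11}$ recovers $k$, and the off-diagonal entries recover $e$ and $f$.

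For the bracket identity itself, I would compute the left-hand side entry by entry using the Leibniz rule applied to the fundamental brackets \eqref{genpb}. The relevant consequences are $\{k^{\pm 1}, e\} = \pm k^{\pm 1} e$, $\{k^{\pm 1}, f\} = \mp k^{\pm 1} f$, and $\{e, f\} = 2(k^2 - k^{-2})$. Writing
\[
L_{11}(z) = zk - z^{-1}k^{-1}, \quad L_{22}(z) = zk^{-1} - z^{-1}k, \quad L_{12}(z) = e, \quad L_{21}(z) = f,
\]
each of the sixteen brackets $\{L_{ij}(z), L_{kl}(w)\}$ reduces to a polynomial in $e, f, k^{\pm 1}$ multiplied by a Laurent polynomial in $z, w$.

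For the right-hand side, I would use the explicit $4\times 4$ form of $r(z/w)$ displayed earlier in the paper. The commutator $[r_{12}(z/w), L_1(z)L_2(w)]$ decomposes into a ``Cartan part'' coming from the $H\otimes H$ summand (which contributes only to the diagonal $1, 4$ and to the $22, 33$ diagonal entries of the tensor product) and an ``off-diagonal part'' coming from $E\otimes F + F\otimes E$ (which mixes the two middle coordinates). Reading off the $(ij, kl)$ entry gives an expression involving the same rational combinations of $z, w$ as on the left. Symmetries do most of the work: the involution $(z, k, e) \leftrightarrow (z^{-1}, k^{-1}, f)$ is compatible with both sides, and the unitarity $r_{12}(z^{-1}) = -r_{12}(z)$ together with the tensor-swap symmetry reduces the independent checks to a handful of brackets, for instance $\{L_{11}(z), L_{12}(w)\}$, $\{L_{11}(z), L_{22}(w)\}$, and $\{L_{12}(z), L_{21}(w)\}$. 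The latter is the most instructive: the left side contributes $\{e, f\} = 2(k^2 - k^{-2})$, while the right side produces the diagonal $(22) - (33)$ contribution $\frac{2zw}{z^2 - w^2} \cdot$ (difference of products of diagonal entries of $L_1(z), L_2(w)$), and a short simplification confirms agreement.

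The main obstacle is organizational rather than conceptual: keeping track of signs and of powers of $z, w, k$ as one expands both sides is error-prone, but the answer is forced by the symmetries above. A more conceptual route would be to recognize $L(z)$ as the image of the universal $\mathcal L$-operator under a natural embedding of Lie bialgebras coming from the identification of $SL_2^*$ with a Poisson subgroup of the dual to $LGL_2$; however, making this identification rigorous is no shorter than the direct verification and would obscure the explicit matrix form of $L(z)$ that is needed in the sequel.
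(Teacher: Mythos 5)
The paper states this proposition without proof, treating the bracket identity as a routine computation; your direct entry-by-entry verification is exactly the argument being left implicit, and it is correct. I checked representative entries (e.g.\ the $\{e,f\}$ entry, where the $ef$ terms from the $H\otimes H$ part cancel and the $E\otimes F+F\otimes E$ part yields $\frac{2zw}{z^2-w^2}(zw^{-1}-z^{-1}w)(k^2-k^{-2})=2(k^2-k^{-2})$, and the $\{L_{11}(z),e\}$ entry, which reduces to $zke+z^{-1}k^{-1}e$ after the numerator factors as $(z^2-w^2)(zk+z^{-1}k^{-1})$), and both sides agree. Your deduction of the Poisson property from the identity is also standard, since the matrix entries of the evaluation representation generate the relevant function algebra and the Sklyanin relation $\{T_1,T_2\}=[r_{12},T_1T_2]$ defines the bracket on them; the only caveat is that your proposed symmetry $(z,k,e)\leftrightarrow(z^{-1},k^{-1},f)$ sends $L(z)$ to a matrix with sign changes on the diagonal, so the reduction to ``a handful of brackets'' needs slightly more care than stated, though this does not affect the validity of the overall argument.
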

{\bf Remark. } The Poisson map $\varphi$ is the semiclassical counterpart of the {\em evaluation homomorphism} $U_q(\widehat{gl_2})\rightarrow U_q(gl_2)$ which is fundamental in the study of finite dimensional representations of the quantum affine algebra $U(\widehat{gl_2})$, see \cite{CP}.  

\bigskip

Now since group multiplication $LGL_2\times LGL_2\rightarrow LGL_2$ in the Poisson-Lie group $LGL_2$ is Poisson, $\varphi$ gives rise to a Poisson map
\begin{align}
&\varphi^{(N)}:  \  \left(SL_2^*\right)^{\times N}\longrightarrow LGL_2,\\
\nonumber 
&\left(\left[\begin{array}{cc}k_n&e_n\\0&k_n^{-1} \end{array}\right], \left[\begin{array}{cc}k_n^{-1}&0\\f_n&k_n\end{array}\right]\right)^N_{n=1}\longmapsto L_1(z)\cdots L_N(z)
\end{align}
where 
$$
L_n(z)=\left(\begin{array}{cc} zk_n-z^{-1}k_n^{-1} & e_n\\f_n&zk_n^{-1}-z^{-1}k_n  \end{array}\right)
$$
Restricting $\varphi^{(N)}$ gives an embedding of the $2N$-dimensional symplectic manifold $\Sigma_{\mathbf{t}}=\Sigma_{t_1}\times\cdots\times\Sigma_{t_N}$ into $LGL_2$, whose image is the phase space of the classical XXZ spin chain with $N$ sites.  The reflection monodromy matrix then becomes
\begin{align}
\mT(z)=L_1(z)\cdots L_N(z)\mathcal{K}_-(z)L_{N}^{-1}(z^{-1})\cdots L_{1}^{-1}(z^{-1})
\end{align}

\bigskip

Let us conclude by deriving the local Hamiltonian of the homogeneous chain where $\omega_i\equiv \omega=t^2+t^{-2}$, using the technique explained in \cite{sixv} and references therein. The method is based on the observation that when $z=t^{\pm1}$, the matrix $L_n(z)$ degenerates into a projector
$$
L_n(t)=\alpha_n\otimes\beta_n^T
$$
with
$$
\alpha= \left(\begin{array}{c}1 \\ (tk_n^{-1}-t^{-1}k_n)/e_n\end{array}\right), \ \ \beta=  \left(\begin{array}{c}tk_n-t^{-1}k_n^{-1} \\ e_n\end{array}\right)
$$
We will also use the identity
\begin{align*}
L(z)L(z^{-1})&=-\det L(z)\mathrm{Id}\\
&=(\omega-z^{2}-z^{-2})\mathrm{Id}
\end{align*}
which allows us to consider the regularized reflection monodromy matrix
$$
S(z)=\left((-1)^N\prod_{i=1}^N\det L_i(z)\right)\mT(z)=L_1(z)\cdots L_N(z)\mathcal{K}_-(z)L_N(z)\cdots L_1(z)$$
which is regular at $z=t^{\pm1}$. Since $S(z)$ differs from the standard reflection monodromy matrix $\mT(z)$ by multiplication by a Casimir, the quantity $tr   \left(S(t)\mathcal{K}_+(z)\right)$ is still a reflection Hamiltonian.    
That trace may be computed as
$$
tr \left( S(t)\mathcal{K}_+(z) \right)
=(\mathcal{K}_-(t)\alpha_N,\beta_N)(\beta_1,\mathcal{K}_+(t)\alpha_1)\prod_{n=1}^{N-1}(\beta_n,\alpha_{n+1})(\beta_{n+1},\alpha_n)
$$
where we use the notation $(\alpha,\beta)$ to denote the Euclidean scalar product of the vectors $\alpha, \beta$.  On the other hand, we note that
\begin{align*}
(\beta_n,\alpha_{n+1})(\beta_{n+1},\alpha_n)&=tr \ \left(L_n(t)L_{n+1}(t) \right)\\
&=e_nf_{n+1}+f_{n+1}e_n+\omega(k_{n}k_{n+1}+k_n^{-1}k_{n+1}^{-1})-2(k_nk_{n+1}^{-1}+k_{n+1}k_{n}^{-1})
\end{align*}
and
$$
(\mathcal{K}_\pm(t)\alpha_n,\beta_n)=\text{const.}\times(\xi_\pm k_n-\xi^{-1}_\pm k_n^{-1})$$
Hence setting
\begin{align*}
H_{n,n+1}&=\log\left(e_nf_{n+1}+e_{n+1}f_n+\omega(k_nk_{n+1}+k_n^{-1}k_{n+1}^{-1})-2(k_nk_{n+1}^{-1}+k_{n+1}k_{n}^{-1})
\right)\\
H_0&=\log(\xi_+ k_1-\xi^{-1}_+ k_1^{-1})
,  \ \ H_N=\log(\xi_- k_N-\xi^{-1}_- k_N^{-1})
\end{align*}
we obtain the local reflection Hamiltonian
$$
\mathcal{H}=H_0+\sum_{n=1}^{N-1} H_{n,n+1} +H_N
$$
for the $N$-site chain.

\end{document}